\def\qed{\hbox{\rlap{$\sqcap$}$\sqcup$}}
\newenvironment{proof}{\par\noindent{\bf Proof:}}{\mbox{}\hfill$\qed$\\}
\newtheorem{theorem}{Theorem}[section]
\newtheorem{lemma}{Lemma}[section]
\newtheorem{claim}{Claim}[section]
\newcounter{rem}
\newcommand{\ignore}[1]{ }
\begin{document}

\title{Auction Algorithm for Production Models}
\author{Junghwan Shin \thanks{Computer Science Ph.d. Candidate, IIT, Chicago, IL, e-mail: jshin7@iit.edu} and Sanjiv Kapoor \thanks{Computer Science, IIT, Chicago, IL, e-mail: kapoor@iit.edu}}
\date{2011}
\maketitle

\begin{abstract}
We show an auction-based algorithm to compute market equilibrium prices in a production model,
where consumers purchase items under separable nonlinear utility concave functions which satisfy W.G.S({\em Weak Gross Substitutes}); producers produce items with multiple linear production constraints.

Our algorithm differs from previous approaches in that the prices are allowed to both increase and decrease to handle changes in the production.
This provides a t\^{a}tonnement style algorithm which converges and provides a PTAS.
The algorithm can also be extended to arbitrary convex production regions and the Arrow-Debreu model.
The convergence is dependent on the behavior of the marginal utility of the concave function.
\end{abstract}

\section{Introduction}
The market equilibrium is a well studied problem for a general market model which includes production constraints\cite{AD54}.
Arrow and Debreu\cite{ABH59} had introduced a general production model for exchange markets and have shown proof of existence of equilibria.
In the Arrow-Debreu model, each production schedule lies in a specified
convex set. When the production model is constrained by positive production vectors only, convex programs have been obtained\cite{P93}\cite{NP83} for linear cases.
There has been considerable  recent research on the complexity of computing equilibria \cite{CV04,DPSV02,DV03,DV04,E61,GK04,J06}.

To solve the market equilibrium problem, a number of
approaches have been used including convex programming, auction-based algorithm\cite{GK04} and primal-dual\cite{DPSV02} methods.

Two techniques, the primal-dual schema and auction-based algorithms, have mainly been successful for models satisfying W.G.S.(weak gross substitutability). Other technique includes t\^{a}tonnement processes.

Jain et al.\cite{JVE05} further generalize to a model with homothetic, quasi-concave utilities which is introduced in \cite{EG59} and \cite{E61}.
However, the paper is restricted in that the concave functions are assumed to be homogeneous of degree one. We differ in that we consider functions that are {\em Weak Gross Substitutes}.
Jain et al.\cite{KK06} also give an explicit, polynomial sized convex program for the production planning model with linear utilities.
These papers utilize convex programming.
Codenotti et al. \cite{CMV05} consider gross-substitute functions with positive production constraints and provide approximation
results via the ellipsoid method.
Recent results that include production within the model include the work on price discrimination model \cite{GV10}.
Our results apply to convex production sets that are not constrained to be positive.

For the production model when there is one constraint an auction-based algorithm is provided in \cite{KMV07}.
In this paper, we give an auction-based algorithm for a
production model in which consumers have separable utilities for items
that satisfy the weak gross substitutes property.
Furthermore, producers have multiple linear production constraints.
Note that we can also consider producers that
 sell items to maximize their profit as well as purchae materials or resources to produce them.
In this model,
note that each buyer chooses a subset of items that maximizes her utility and
each producer chooses a feasible production plan that maximizes his profit at current prices.

While auction methods have been applied to market equilibrium problems, the
key aspect of many algorithms (\cite{GKV04}\cite{KMV07}) is that price discovery is
monotone since goods are always sold out, i.e. overdemanded. In this model, the
change in production plans during the course of the algorithm induces
oversupply of goods. This implies that price discovery cannot be monotone. This
is a critical difference from previous methods.
Thus unlike previous auction-based algorithms for consumer and production models, we consider auction algorithm which decrease price also.
Since each producer has an arbitrary number of production constraints,
updating prices affects production schedule.
When producers chooses a bundle of items which also
maximize consumer utilities, decreasing the prices is not required.
However, when profitable items are not demanded by consumers, price decrease may be needed.
Bounding the decrease in price can be done by small increments of the production plan along
a gradient direction specifying increasing profits. The direction is obtained
from the convex program describing the production plans. The step length is dictated
by the behavior of the utility functions, in particular on the behavior of the marginal utility function.
For simplicity of presentation our results are shown in the Fisher model with linear production constraints.
They can be extended to arbitrary convex production regions and the Arrow-Debreu model.

This paper is organized as follow:
In Section \ref{math_model}, we define the production market model and invariance of the algorithm to show the correctness and the convergence.
We describe the overall idea on the auction-based algorithm and describe in market states and market state transitions in Section \ref{top_level}.
Further, procedures are described in Section \ref{procedures} and we show invariance conditions that ensure optimality in Section \ref{invariance}.
In Section \ref{complexity}, we finally show the correctness of the algorithm in the algorithm and evaluate the complexity.
In the Appendix, we will describe the details of the algorithm.

\ignore{
\section{Introduction}
The market equilibrium is a well studied problem for a general market model which includes production constraints\cite{AD54}.
Arrow and Debreu\cite{ABH59} had introduced a general production model for exchange markets and have shown proof of existence of equilibria.
In the Arrow-Debreu model, each production schedule lies in a specified
convex set. When the production model is constrained by positive production vectors only, convex programs have been obtained\cite{P93}\cite{NP83} for linear cases.
There has been considerable  recent research on the complexity of computing equilibria \cite{CV04,DPSV02,DV03,DV04,E61,GK04,J06}.

To solve the market equilibrium problem, a number of
approaches have been used including convex programming, auction-based algorithm\cite{GK04} and primal-dual\cite{DPSV02} methods.

Two techniques, the primal-dual schema and auction-based algorithms, have mainly been successful for models satisfying W.G.S.(weak gross substitutability). Other technique includes t\^{a}tonnement processes.

Jain et al.\cite{JVE05} further generalize to a model with homothetic, quasi-concave utilities which is introduced in \cite{EG59} and \cite{E61}.
However, the paper is restricted in that the concave functions are assumed to be homogeneous of degree one. We differ in that we consider functions that are {\em Weak Gross Substitutes}.
Jain et al.\cite{KK06} also give an explicit, polynomial sized convex program for the production planning model with linear utilities.
These papers utilize convex programming.
Codenotti et al. \cite{CMV05} consider gross-substitute functions with positive production constraints and provide approximation
results via the ellipsoid method.
Recent results that include production within the model include the work on price discrimination model \cite{GV10}.

For the production model when there is one constraint an auction-based algorithm is provided in \cite{KMV07}.
In this paper, we give an auction-based algorithm for a
production model in which consumers have separable utilities for items
that satisfy the weak gross substitutes property.
Furthermore, producers have multiple linear production constraints.
Note that we can also consider producers sell items to maximize their profit by purchasing materials or resources to produce them.
In this model,
note that each buyer chooses a subset of items that maximizes her utility and
each producer chooses a feasible production plan that maximizes his profit at current prices.

Unlike previous auction-based algorithms for consumer and production models\cite{GKV04}\cite{KMV07}, we consider auction algorithm which decrease price also.
Since each producer has an arbitrary number of production constraints,
updating prices affects production schedule.
When producers chooses a bundle of items which also
maximize consumer utilities, decreasing the prices is not required.
However, when profitable items not matched with beneficial items, decreasing prices may be needed.

Unlike the case of linear utility functions, the algorithm depends on
the behavior of the concave function, in particular on the behavior of
the marginal utility function.
We thus need to be careful in establishing changes in
the production.
Our results are shown in the Fisher model with linear production constraints.
They can be extended to arbitrary convex production regions and the Arrow-Debreu model.

This paper is organized as follow:
In Section \ref{math_model}, we define the production market model and invariance of the algorithm to show the correctness and the convergence.
We describe the overall idea on the auction-based algorithm and describe in market states and market state transitions in Section \ref{top_level}.
Further, procedures are described in Section \ref{procedures} and we show invariance conditions that ensure optimality in Section \ref{invariance}.
In Section \ref{complexity}, we finally show the correctness of the algorithm in the algorithm and evaluate the complexity.
} 

\section{Production Model}\label{math_model}
We consider a market equilibrium problem with production, termed as {\em MEP}, with {\em nonlinear utility functions} and {\em multiple linear production constraints}.
In here, we consider nonlinear utility functions satisfy W.G.S. property.
The production model we consider has $q$ producers, along with $n$ consumers(traders) and $m$ items.
For simplicity, we assume that every item is produced in a quantity greater than or equal to $\epsilon$.

Consumer $i$ has a utility function, termed as $U_i(X_i)=\sum_ju_{ij}(x_{ij})$, and fixed initial endowment, $e_i$.
$x_{ij}$ represents the amount of allocation on item $j$ to consumer $i$,
$X_i=(x_{i1},x_{i2},...,x_{im})$ denotes the current allocation vector on the items to consumer $i$,
and $u_{ij}(x_{ij}):R^+ \rightarrow R^+$ is the function representing the
utility of item $j$ to consumer $i$.
Our assumption is that $u_{ij}$ is separable.
We denote by $v_{ij}$ the first derivative of $U_i$ w.r.t. $x_{ij}$.

We let $P$ be the vector of prices of the items, where
the $j$-th component $p_j$ represents the price of item $j$.
Then, we can represent {\em bang-per-buck}, $\forall i,j: \alpha_{ij} = v_{ij}(x_{ij})/p_j$, and let $\alpha_i$ be {\em max bang-per-buck} s.t. $\alpha_i = \max_j\alpha_{ij}$.

For producers, let $z_{sj}$ represent the quantity on item $j$ produced by producer $s$ and sold or bought by producer $s$.
Also, there are non-manufactured raw items defined as $a_j$ for all items.
Producer $s$ gains profit $\sum_jp_{j}z_{sj}$ when all items are sold out at the price $P$.
We assume that the production schedule is constrained by a set of linear inequalities.
Suppose producer $s$ has $l_s$ linear constraints.
Then, $\forall s,\ell: \sum_j z_{sj}a_{sj}^\ell \leq K_s^\ell$, where $a_{sj}^\ell$ and $K_s^\ell$ are constants determined by the production schedules by producer $s$.
$\forall s,j : z_{sj} \in \mathbb{Z}$.

Given a fixed vector of prices $P$,
the following linear programs, $CP_i(P)$ and $PP_s(P)$ represent
the optimal consumption and production schedule, respectively.

\paragraph{Consumer nonlinear programming for consumer $i$}
$$
\mbox{\bf $CP_i(P)$: } \mbox{ Maximize } \sum^m_{j=1}U_i(X_i) \mbox{ subject to}
$$
\begin{eqnarray}
\phantom{a} & \sum^m_{j=1}x_{ij}p_j \leq e_i & \forall i\\
\phantom{a} & x_{ij} \geq 0 & \forall i,j
\end{eqnarray}

\paragraph{Production linear programming for producer $s$}
$$
\mbox{\bf $PP_s(P)$: } \mbox{ Maximize } \sum^m_{j=1}p_{j}z_{sj} \mbox{ subject to}
$$
\begin{eqnarray}
\phantom{a} & \sum_j z_{sj}a_{sj}^\ell \leq K_s^\ell & \forall s,\ell\\
\phantom{a} & z_{sj} \geq 0 & \forall s,j
\end{eqnarray}

Furthermore, we assume that utilities for the items
satisfy the {\em Weak Gross Substitute} property.
Items are said to be weak gross substitutes for a buyer
iff increasing the price of any item does not decrease the buyer's demand
for other items.
Similarly, items in an economy are said to be weak  gross substitutes iff
increasing the
price of any item does not decrease the total demand of other items.

Consider the consumer maximization
problem $CP_i(P)$.
Let $S_i(P) \subset R^m_+$ be the set of optimal solutions
of the program $CP_i(P)$. Consider another price
vector $P' > P$. Items are gross
substitutes for buyer $i$ if and only if for all $X_i \in S_i(P)$ there
exists $X'_i \in S_i(P')$ such that
$p_j = p'_j \Rightarrow x_{ij} \le x'_{ij}$.

Note that since $u_{ij}$ is
concave, $v_{ij}$ is a non-increasing function. The following result
in \cite{GK04}
characterizes the class of separable concave gross substitutes utility functions.

\begin{lemma}\cite{GK04}
\label{gs-cond}
Items are gross substitutes for buyer $i$ if and only if for all $j$, $y v_{ij}(y)$ is a non-decreasing function of the scalar $y$.
\end{lemma}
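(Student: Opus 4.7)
The plan is to prove the two implications separately using the KKT characterization of $CP_i(P)$: any optimum $X_i^* \in S_i(P)$ admits a multiplier $\lambda \ge 0$ with $v_{ij}(x_{ij}^*)=\lambda p_j$ whenever $x_{ij}^*>0$, $v_{ij}(0)\le \lambda p_j$ otherwise, and (with the budget binding) $\sum_j p_j x_{ij}^*=e_i$. Both directions hinge on how $\lambda$, and thereby the allocation, reacts to a price increase.

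For sufficiency, assume $y\,v_{ij}(y)$ is non-decreasing and take $P'\ge P$. The heart of the argument is to show $\lambda'\le \lambda$, where $\lambda'$ is the KKT multiplier at $P'$. I would introduce the auxiliary allocation $\tilde X_i$ defined by $v_{ij}(\tilde x_{ij})=\lambda p'_j$. Since $v_{ij}$ is non-increasing and $p'_j\ge p_j$, one has $\tilde x_{ij}\le x_{ij}^*$; the hypothesis then gives $\tilde x_{ij}v_{ij}(\tilde x_{ij})\le x_{ij}^*v_{ij}(x_{ij}^*)$, which after dividing by $\lambda$ and summing yields $\sum_j p'_j\tilde x_{ij}\le \sum_j p_j x_{ij}^*=e_i$. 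Thus $\tilde X_i$ underspends the budget at $P'$, and since total expenditure decreases monotonically in the multiplier, $\lambda'\le \lambda$. Now for any $j$ with $p'_j=p_j$ I would pick $X_i'\in S_i(P')$ satisfying either $x_{ij}'=0=x_{ij}^*$ or $v_{ij}(x_{ij}')=\lambda' p_j\le \lambda p_j=v_{ij}(x_{ij}^*)$, which by monotonicity of $v_{ij}$ forces $x_{ij}'\ge x_{ij}^*$ (in the flat case one simply selects the upper endpoint of the indifference segment in $S_i(P')$).

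For necessity I would construct a tailored two-item instance. Fix $j$ and introduce an auxiliary item $k$ with linear utility $u_{ik}(y)=cy$, so $v_{ik}\equiv c$; consider a market with only items $j,k$ and some budget $e_i$. At interior optima the KKT conditions give $v_{ij}(x_{ij})=c p_j/p_k$ and $p_kx_{ik}=e_i-p_jx_{ij}$, hence $p_jx_{ij}=(p_k/c)\,x_{ij}v_{ij}(x_{ij})$. Raising $p_j$ to $p_j'>p_j$ strictly lowers $x_{ij}$ (via the KKT equation and monotonicity of $v_{ij}$), and gross substitutes applied to item $k$ forces $x_{ik}$ to be non-decreasing, which is equivalent to $p_j x_{ij}$ being non-increasing in $p_j$. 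Substituting the closed form shows this is precisely the condition that $y\,v_{ij}(y)$ is non-decreasing in $y$. By ranging over the parameters $(p_j,p_k,c,e_i)$ one realises every pair $y_1<y_2$ in the domain, yielding the global inequality.

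The main obstacle is the necessity direction: one must verify that the two-item scenario actually \emph{sees} every pair $(y_1,y_2)$, i.e., that for arbitrary $y_1<y_2$ one can solve $v_{ij}(y_2)=c p_j/p_k$ and $v_{ij}(y_1)=c p_j'/p_k$ simultaneously with $p_j'\ge p_j$ and with a sufficiently large budget keeping $x_{ik}\ge 0$; this relies on the monotonicity of $v_{ij}$ together with the freedom to choose $c$ and $p_k$. A secondary subtlety on the sufficiency side arises when $v_{ij}$ has flat pieces or when ties occur at the KKT boundary $v_{ij}(0)=\lambda p_j$, where the \emph{existence} (rather than uniqueness) formulation of $S_i(P')$ in the statement is essential for selecting an $X_i'$ with $x_{ij}'\ge x_{ij}^*$.
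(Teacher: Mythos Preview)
The paper does not prove this lemma at all; it is quoted from \cite{GK04} as a known characterization of separable gross-substitutes utilities, so there is no in-paper argument to compare against. Your sufficiency direction via the KKT multiplier is the standard route and is correct: the auxiliary allocation $\tilde X_i$ with $v_{ij}(\tilde x_{ij})=\lambda p'_j$ underspends precisely because $y\,v_{ij}(y)$ is monotone, forcing $\lambda'\le\lambda$, and you rightly flag the tie-breaking needed when $v_{ij}$ has flat segments.

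Your necessity direction, however, has a genuine gap as written. The linear auxiliary item $k$ is not decorative: its role is to \emph{pin the Lagrange multiplier}, since $v_{ik}\equiv c$ forces $\lambda=c/p_k=\lambda'$ regardless of $p_j$. That is exactly what lets you pass from ``expenditure $p_jx_{ij}$ is non-increasing in $p_j$'' to ``$x_{ij}v_{ij}(x_{ij})$ is non-increasing in $p_j$.'' If instead you try to run the argument inside the given market by pricing out all items except $j$ and some existing $k$, the multiplier will move: GS gives $x'_{ik}\ge x_{ik}$, hence $\lambda'=v_{ik}(x'_{ik})/p_k\le\lambda$, and the inequality $x'_{ij}v_{ij}(x'_{ij})/\lambda'\le x_{ij}v_{ij}(x_{ij})/\lambda$ no longer yields $x'_{ij}v_{ij}(x'_{ij})\le x_{ij}v_{ij}(x_{ij})$. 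So your construction proves the contrapositive only for markets that \emph{already contain} a linear good, which is not what the lemma (as stated here, for the fixed program $CP_i(P)$) asserts. In the source \cite{GK04} the result is effectively a characterization of the component functions $u_{ij}$ themselves, under which reading one is free to test against any companion good and your argument is fine; but you should either make that reading explicit or supply a within-market argument that controls the drift of $\lambda$.
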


\ignore{
\paragraph{Linear programs where an optimal solution consists of non-negative values}
We consider a modified LP of $PP_s(P)$, termed as $PP'_s(P)$.
Let define $z'_{sj} = z_{sj} + C$, where $C \in \mathbb{R}_+$ is a substantially big number.
$$
\mbox{\bf $PP'_s(P)$: } \mbox{ Maximize } \sum^m_{j=1}p_{j}z'_{sj} - C\sum^m_{j=1}p_{j} \mbox{ subject to}
$$
\begin{eqnarray}
\phantom{a} & \sum_j z'_{sj}a_{sj}^\ell \leq K_s^\ell + C\sum_j a_{sj}^\ell & \forall s,\ell\\
\phantom{a} & z'_{sj} \geq 0 & \forall s,j
\end{eqnarray}
We can calculate $C$ since the constraints of $PP_s(P)$ form a bounded feasible set. (@@ we need a citation here)
However, to solve $PP'_s(P)$ we also need to consider the same as $PP_s(P)$, and we provides an auction algorithm for $PP_s(P)$.
}

\paragraph{$\epsilon$-equilibrium in the Production Model}
We define an $\epsilon$-equilibrium in the production model as follows:
\begin{enumerate}
\item {\em Sold-out condition:} All produced items are sold to consumers within a factor of ($1+\epsilon$).\label{soldout}
\item {\em Bpb condition:}
The bang per buck on the items purchased by consumer $i$
should be almost the same.\label{consumerFine}
\item{\em Opt-prod condition:} The current production plan almost maximizes the profit of each producer.\label{producerFine}
\item{\em Termination condition:} Consumers spend all their endowment within a factor of $(1+\epsilon)$.\label{spendmoney}
\end{enumerate}
\noindent
The first three condition are referred to as
conditions {\em OPT} and are detailed below:
\begin{eqnarray}
& &I_1:\forall j : \sum_sz_{sj}/(1+\epsilon) \leq \sum_ix_{ij} \leq \sum_sz_{sj} \label{soldout2} \\
& &I_2:\forall i,j : x_{ij} > 0 \Rightarrow v_{ij}(x_{ij}) \leq \alpha_{ij}p_j \leq (1+\epsilon)v_{ij}(x_{ij}) \label{consumerFine2} \\
& &I_3:\forall s,j : p_j\hat{z}_{sj} \leq (1+\epsilon)p_jz_{sj} \label{producerFine2}
\end{eqnarray}
\noindent
where $\hat{z}_{sj}$ is the optimal amount of item $j$ produced by $s$.
Equilibrium is established together with the termination condition  below:
\begin{eqnarray}
& &I_4:\forall i : r_i \leq \epsilon e_i\label{spendmoney2}
\end{eqnarray}

\section{Auction Algorithm for the Production model}\label{top_level}
\subsection{Notations and Preliminaries}
We define the {\em demand set} of consumer $i$ as $D_i = \{j: v_{ij}(x_{ij})/((1+\epsilon)p_j) \leq \alpha_i \leq v_{ij}(x_{ij})/p_j\}$.
We work with a discretized price space, and at any instant
a good is sold at  at two level prices: $p_j$ and $p_j/(1+\epsilon)$.
We let $h_{ij}$ represent the amount of item $j$ that consumer $i$ buys at
price $p_j$, and $y_{ij}$ represent the amount of item $j$ that consumer $i$
purchases at price $p_j/(1+\epsilon)$.
$x_{ij}$ which is the summation of $h_{ij}$ and $y_{ij}$ represents the total quantity of allocation on item $j$ by consumer $i$.
We also let $r_i$ denote {\em residual money} which can be
calculated as $e_i-\sum_j(p_jh_{ij}+p_jy_{ij}/(1+\epsilon))$.

We denote by $\mathcal{Z}_s$ a set of {\em feasible production plan(schedule)} of producer $s$
 i.e.,
$\mathcal{Z}_s = \{Z : A^TZ \leq K_s\}$, where $A$ represents a $m \times \ell$ matrix whose row and column correspond to items and constraints, respectively. $Z$ represents a production plan and $K_s$ represents capacity  constraints, $Z,K_s \in R^m$.
Note that $Z$ is an unconstrained vector, that is, if a producer produces item $j$
as a product, then $Z$ may be positive; if a producer consumes item $j$ as a material
or resource, then $Z$ may be negative.
Let us define a {\em profitable} production plan of producer $s$ as $\hat{Z}_s = \arg\max_{Z \in \mathcal{Z}_s}\{P^TZ | A^TZ \leq K_s\}$ at price $P$.

We define a set of {\em over-demanded} items $\mathcal{O}_d = \{j: \exists i$ s.t. $\sum_sz_{sj} < \sum_ix_{ij}\}$.
Moreover, we define a set of {\em over-supplied} items as $\mathcal{O}_s = \{j: \sum_sz_{sj} > \sum_ix_{ij}\}$.

\subsection{Algorithm Overview}
We apply the auction mechanism to solve the problem of finding equilibrium
in the linear production model.
Note that in this application we allow the algorithm to
decrease and increase prices.
The market equilibrium in the production model is to find a price vector at which
consumers maximize their utility and producers maximize their profit.
Consumers want to purchase items at lower price, while producers want to sell items at
higher price, creating complementary optimization constraints.
In fact, the producers would wish to
produce as much demand as the market can sustain, to maximize their profit.
The algorithm we design allows producers to increase supply incrementally to change their production plan according to the current market price.

The algorithm is iterative and its progress can be measured by the changes in its state during successive steps.
The algorithm we design allows producers to increase supply incrementally to change their production plan according to the current market price.
To prove convergence we will show that the producer's profit increases during phases
of the algorithm.

\paragraph{Overview  of the algorithm}
The algorithm starts with an initialization , procedure {\em initialize} where  the
price vector is initialized, each producer is
assigned an initial production schedule and each consumer is assigned goods, all quantities
small enough.
In procedure {\em algorithm\_main}, the algorithm determines if consumers have no extra demand and
producers have no way to increase their profit and consequently stops.
The invariant conditions to be
maintained at the end of every phase are
conditions (\ref{soldout2}),(\ref{consumerFine2}) and (\ref{producerFine2}).
If at the beginning of a phase no consumer has any residual money left, the procedure
can terminate satisfying all the conditions.
Otherwise, it is determined if the consumer can outbid other consumers to acquire her
desired items in procedure {\em satisfy\_demand} and subsequently adjust
the allocations of other consumers {\em adjust\_bpb} to ensure condition (\ref{consumerFine2}).
Each of these procedures may result in an increase in price.
Consequently, the production schedule may required to be changed. This is
done in  procedure {\em prod\_reschedule}.

\ignore{
At the beginning of the algorithm, procedure {\em initialize} is called to initialize all variables.
In procedure {\em algorithm\_main}, the algorithm stops and converges if consumers have no extra demand and producers have no way to increase their profit.
Otherwise, consumer checks if the consumer can outbid other consumers to acquire her desired items in procedure {\em satisfy\_demand}, and {\em adjust\_bpb} allows consumers to maintain bang-per-buck condition according to $\alpha_{ij}$.
After the consumer tries to outbid, a current price vector may change, and producers may update their production schedule in procedure {\em prod\_reschedule}.
}
\begin{algorithm*}[!h]
\caption{algorithm\_main}
\begin{algorithmic}[1]
\STATE {\em initialize}
\STATE \hspace{2 cm} // bpb, opt-prod and sold-out conditions hold.
\WHILE{there exists consumer $i$ with residual money}
\STATE update bang-per-buck
\STATE \hspace{1.6 cm} // bpb, opt-prod and sold-out conditions hold.
\STATE {\em satisfy\_demand}($i$)
\STATE \hspace{1.6 cm} // sold-out condition holds.
\STATE {\em adjust\_bpb}
\STATE \hspace{1.6 cm} // bpb and sold-out conditions hold.
\STATE {\em prod\_reschedule}
\STATE \hspace{1.6 cm} // bpb, opt-prod and sold-out conditions hold.
\ENDWHILE
\STATE \hspace{2 cm} // termination, bpb, opt-prod and sold-out conditions hold.
\end{algorithmic}
\end{algorithm*}

\begin{algorithm}[!h]
\caption{satisfy\_demand($i$)}
consumer $i$ with extra demand purchase items
\begin{algorithmic}[1]
\STATE update a set of demanded items of consumer $i$
\IF{item $j$ is available at lower price from consumer $k$}
\STATE {\em outbid}($i,k,j,\alpha_{ij}$)
\STATE update bang-per-buck
\ELSE
\STATE {\em raise\_price}($j$)
\ENDIF
\end{algorithmic}
\end{algorithm}

\begin{algorithm}[!h]
\caption{adjust\_bpb}
bang-per-buck condition holds by outbid
\begin{algorithmic}[1]
\WHILE{bang-per-buck not hold, i.e $\exists i: r_i > 0$ and $\exists j: \alpha_{ij}p_j < v_{ij}(x_{ij})$}
\IF{item $j$ is available at lower price from consumer $k$}
\STATE {\em outbid}($i,k,j,\alpha_{ij}$)
\ELSE
\STATE {\em raise\_price}($j$)
\ENDIF
\ENDWHILE
\end{algorithmic}
\end{algorithm}


The procedure, {\em prod\_reschedule}, determines a feasible direction that improves
the profit of a producer, by using a linear program.
A small step of length $\delta$ is taken along the feasible direction. The step length is chosen so
that optimality conditions hold as shown later.
The step provides a new production plan
based on the current price vector $P$, and the production of items may decrease and increase as
compared with the previous production plan.
Also, note that $z_{sj}$ may be negative in that producer $s$ consumes item $j$ to produce other items;
The change in the production of items leads to goods that are either oversupplied or
over demanded.
The demand and supply are then balanced in the procedures: {\em bal\_od} and {bal\_os}.

Due to a change in the production plans, a good $j$ may be reduced in quantity.
Then the consumers holding the good $h$ have reduced allocation and have money $m$,
returned to them.
To ensure their bang-per-buck conditions,
consumers outbid other consumers in procedure {\em bal\_od}.
After procedure {\em bal\_od}, procedure {\em adjust\_bpb} is called to
ensure that bang-per-buck conditions are met. Note that this may involve
a rise in price.

A change in production may also involve increased production of
some particular good.
In procedure {\em bal\_os}, the algorithm will balance over-supplied items
so that there is no over-supplied item in the market.

Procedure {\em bal\_os} balances the demand and supply of
over-supplied items by four different procedures : {\em purchase\_money}, {\em transfer\_money}, {\em sell\_lprice} and {\em decrease\_price}.
Procedure {\em purchase\_money} is allowed for consumer with extra demand to acquire items.
Unless consumer has residual money, procedure {\em transfer\_money} will take money from other items to purchase her demand items.
If the market has an item that is over-supplied, then producers will provide
items at a lower price as shown in procedure {\em sell\_lprice}.
Finally, Procedure {\em decrease\_price} will be called if there is no available item at a higher price.


\ignore{
In procedure {\em shrink\_model}, we consider a bipartite graph $(U,V,E)$ where $U$ is a set of producers and consumers; $V$ is a set of items.
$E$ is a set of edges associated with demand and production.
If item $j$ is demanded by producer $s$ (consumer $i$), then there is an edge from $s ($ consumer $i)$ to $j$.
Conversely, item $j$ is produced by producer $s$, then there is an edge from $j$ to $s$.
We check whether a disjoint graph $(U',V',E')$ s.t. consumers have no demand on $V' \subseteq \mathcal{O}^-_d$, and consider a new graph $(U \setminus U', V \setminus V', E \setminus E')$.
Now, we consider a market model with producers and consumers in $U \setminus U'$ and items in $V \setminus V'$.
}

\begin{algorithm*}[!h]
\caption{prod\_reschedule}
\begin{algorithmic}[1]
\WHILE {execute {\em lp\_solver} and if not at optimality, $\forall s,j : z'_{sj} := z_{sj} \pm \epsilon'\sum_sz_{sj}/q$}
\IF {producer can increase profit}
\STATE {\em bal\_od}($j$) : $\forall j \in \mathcal{O}_d$
\STATE {\em adjust\_bpb}
\STATE \hspace{1.6 cm} // bpb condition holds.
\STATE {\em bal\_os} : $\forall j \in \mathcal{O}_s$
\STATE \hspace{1.6 cm} // bpb and sold-out conditions hold.
\STATE {\em check\_profit}
\ELSE
\STATE break
\ENDIF
\ENDWHILE
\STATE \hspace{2 cm} // bpb, sold-out and opt-prod condition hold.
\end{algorithmic}
\end{algorithm*}

\ignore{
Note that procedure {\em shrink\_model} deletes some producers and items that are not relevant to the current process.
However, the procedure only affects to the process temporarily not permanently.
That is, procedure {\em unshrink\_model} recovers and all producers and all items are considered.
}

Finally, the algorithm checks whether there is
total profit increase in the overall process at each iteration.
Procedure {\em check\_profit} will be called to check whether consumers
spend more than they did at the prior iteration.

If the total profit does not increase, then changes of productions plans
are voided. It is shown later that in this case the production plans
are at near optimality. The algorithms then exits {\em prod\_reschedule}.

\paragraph{Choosing a value of $\epsilon'$}
We will assume that the producers will change their production plan
by a factor of $1+\epsilon'$.
We choose $\epsilon'$ to ensure an approximation of $(1+\epsilon$.

We choose constants as follows:
set $\epsilon_1$ s.t. $\forall i,j: v_{ij}(x_{ij})/(1+\epsilon) \leq v_{ij}((1+\epsilon_1)x_{ij}) \leq v_{ij}(x_{ij})$ and
$\forall i,j: v_{ij}(x_{ij}) \leq v_{ij}((1-\epsilon_1)x_{ij}) \leq (1+\epsilon)v_{ij}(x_{ij})$.
Let $\epsilon_2 = \epsilon^3/e, e = \sum_ie_i$ and let $\epsilon' = \min(\epsilon_1,\epsilon_2)$.

\subsection{States of the market}
The state of the market at time instant $t$ is represented
as a 3-tuple $(P(t),Z(t),X(t))$, where $P(t),Z(t)$ and $X(t)$ represent the  price vector,
the production plan and the current allocation (or demand) at time $t$, respectively.
We will also use $p_j(t), z_{sj}(t)$ and $x_{ij}(t)$ to denote
the price of item $j$, the production plan for item $j$ by producer $s$ and
the demand of consumer $i$ for item $j$ at time $t$, respectively.

We consider the transitions of the state of the market at time $t$, $(P(t),Z(t),X(t))$,
to the states at time $t+1$, $(P(t+1),Z(t+1),X(t+1))$ via an algorithmic process.

\paragraph{State Transitions of the market}
We consider the state transitions :
\begin{eqnarray}
P(t+1) &= f_P(P(t),Z(t),X(t))\\
X(t+1) &= f_X(P(t),Z(t),X(t))\\
Z(t+1) &= f_Z(P(t),Z(t),X(t))
\end{eqnarray}
where we design the functions $f_P, f_X$ and $f_Z$ to satisfy conditions (\ref{soldout2}) through (\ref{spendmoney2})on the state of the market.
The state transition function $f_P, f_X$ and $f_Z$ are computed in algorithm\_main.
At each iteration conditions (\ref{soldout2}) through (\ref{producerFine2}) should be satisfied.
Furthermore we will show that at termination, condition (\ref{spendmoney2}) is true and thus at termination the market is in a state of
equilibrium.

The function $f_P$ depends on a parameter $P(t),Z(t)$ and $X(t)$.
In other words, $p_j(t)$ will increase by a factor of $(1+\epsilon)$ unless $\exists i : y_{ij}(t) > 0$.
The procedure outbid may occur on either consumer's side or producer's side.
Consumer $i$ outbids consumer $k$ to acquire item $j$, or producer $s$ outbids consumer $k$ to consume item $j$ to produce item $j'$.

To compute $f_X$ we invoke procedure satisfy\_demand which allows consumer $i$ with $r_i > \epsilon e_i$ to purchase item $j \in D_i$ at $p_j(t)$ by acquiring item $j$ from consumer $k$.
If the demand and the supply does not match, then procedures outbid, bal\_os or bal\_od may be called so that $X(t+1)$ may change.

To update the production schedule and compute $f_Z$, we invoke procedure prod\_reschedule.
According to $P(t)$, a linear program solver returns an optimal production schedule.
As mentioned before, a next production schedule is determined by the demand and the supply.

\subsection{Market State Transitions}
\paragraph{States of market in procedure satisfy\_demand}
We now detail the change of the states of market as well as maintenance of
invariants in procedure satisfy\_demand inside procedure algorithm\_main.
In a market state, $(P(t),Z(t),X(t))$, procedure satisfy\_demand calls either
outbid or raise\_price when $\forall j: \alpha_{ij}p_j = v_{ij}(x_{ij})$.
Since there exists consumer $i$ s.t. $r_i > \epsilon e_i$,
she will buy item $k = \arg\max_j\alpha_{ij}$.

Note that before procedure satisfy\_demand,
$\forall j: \alpha_{ij} = v_{ij}(x_{ij}(t))/p_j(t)$.
Then, for item $j \neq k$, $\alpha_{ij}p_j(t) = v_{ij}(x_{ij}(t))$
and $\alpha_{ij}p_j(t+1) = v_{ij}(x_{ij}(t+1))$.
In the case of item $k$, $\alpha_{ik}p_k(t) = v_{ik}(x_{ik}(t))$ and $\alpha_{ik}p_k(t+1) = (1+\epsilon)v_{ik}(x_{ik}(t+1))$.
However, $\alpha_{ik}$ will be updated, and finally, $\forall j: \alpha_{ij}p_j(t+1) = v_{ij}(x_{ij}(t+1))$.

Since there is, $P(t)$ and $X(t)$ may be not equal $P(t+1)$ and $X(t+1)$, respectively. However, no change in production plan, $Z(t) = Z(t+1)$.

\paragraph{States of market in procedure adjust\_bpb}
We next consider changes to the states of the market and
the invariants in procedure adjust\_bpb inside procedure algorithm\_main.
In a market state, $(P(t),Z(t),X(t))$, procedure adjust\_bpb calls either outbid or raise\_price when $\exists j: \alpha_{ij}p_j < v_{ij}(x_{ij})$.
For the corresponding consumer $i$, $\forall j: \alpha_{ij}p_j = v_{ij}(x_{ij})$ at the end of adjust\_bpb.

For item $j$ s.t. $\alpha_{ij}p_j(t) < v_{ij}(x_{ij}(t))$, procedure outbid makes $p_j(t) = p_j(t+1), x_{ij}(t) < x_{ij}(t+1)$; on the other hand, procedure raise\_price makes $(1+\epsilon)p_j(t) = p_j(t+1)$. The aim of this procedure is $\alpha_{ij}p_j(t+1) = v_{ij}(x_{ij}(t+1))$.

Therefore, $P(t)$ and $X(t)$ may be not equal to $P(t+1)$ and $X(t+1)$, respectively. Since there is no change in production plan, $Z(t) = Z(t+1)$.

\paragraph{States of market in procedure prod\_reschedule}
Given $(P(t),Z(t),X(t))$ (the current market state) procedure prod\_reschedule
calls lp\_solver which returns optimal production plans, $\hat{Z}_s$ according to $P(t)$ (as shown in line 2 in procedure prod\_reschedule).
$Z_s(t+1)$ depends on $Z_s(t)$ and $\hat{Z}_s$ as follows:

$$\forall s,j : z_{sj}(t+1) =
\left\{
\begin{array}{cl} \mbox{$z_{sj}(t) + \Delta$, $\Delta = \epsilon' \sum_sz_{sj}(t)/q$} & \mbox{if $\|\Delta\| \leq ||\hat{z}_{sj}(t)-z_{sj}(t)||$} \\
\mbox{$\hat{z}_{sj}(t)$} & \mbox{otherwise}\end{array}\right.
$$

Since the production plan changes, allocations of consumers and production of producers might change.
These changes may increase or decrease price of items.
We describe more details in Section \ref{procedures}.

\ignore{

\noindent
When $\hat{z}_{sj}(t) - z_{sj}(t) = 0$, there may be no production change.
However, if the price changes due to raise\_price, then $\hat{z}_{sj} - z_{sj}(t)$ may not equal $0$, and production may change.
With respect to the new production schedule, $Z_s(t+1)$ there are two possibilities.
\begin{enumerate}
\item $\forall s: P^\mathrm{T}(t+1)Z_s(t+1) < (1+\epsilon')P^\mathrm{T}(t)Z_s(t)$,
i.e. no producer expects profit increase with respect to the current prices.
Or, producer $s$ expects profit increase, but there is not enough profit increase via a new production plan($< \gamma$).
In this situation no change is made and the procedure exits.
Note that conditions (\ref{soldout2}) through (\ref{producerFine2}) are previously true and remain true.
\item $\exists s: P^T(t+1)Z_s(t+1) - P^T(t)Z_s(t) \geq \gamma$, i.e producer $s$ has profit increase as expected via $Z_s(t+1) := Z_s(t)$.
Indeed, we set $\gamma$ as $\epsilon' \sum_jz_{sj}p_j$ for the corresponding $s$.
It ensures that the profit increase of any producer $s$ is guaranteed to be at least $\epsilon' \sum_jz_{sj}p_j$ according to the previous profit.
Note that our aim is to satisfy $ \sum_sz_{sj}(t+1) = \sum_ix_{ij}(t+1)$.
We ensure this as described below.
\end{enumerate}

In terms of total profit, we will describe briefly before we show the details later.
Since any item in the production schedule is unconstrained, there may be a producer s.t. $P^T(t+1)Z_s(t+1) < P^T(t)Z_s(t).$
However, remind that items purchased by producers are consumed by producers so that $\sum_sz_{sj}(t+1)$ may be zero.
That is, although price of an item with a negative production increases, it will be sold out by producers to consume the item at a higher price.
Therefore, in terms of total profit and a round-robin, $\sum_sP^T(t+1)Z_s(t+1) \geq \sum_sP^T(t)Z_s(t)$.

}

\section{Other procedures in the algorithm}\label{procedures}
In this section we will discuss other important and  their properties.

\ignore{
\subsection{Procedure shrink\_model}
At each iteration the demand of consumers might change and the productions might also change.
In the iteration, the production schedule according to the price vector might not be along the demand of consumers.
Especially when an item is not demanded by any consumer, producers do not want to compete by themselves in that the competition of producers will not increase total profit at all.
In this case, the algorithm calls procedure {\em shrink\_model} where we delete a set of items $J'$ which are not demanded by any consumer and a set of producers $S'$ consuming or producing items $j'$.
Whenever a shrinking model is required, we consider a smaller model which might consist of less number of producers and less number of items.
At the end of the production schedule, procedure {\em unshrink\_model} will insert all deleted producers and items.
It will be shown that the result of the shrunken model will be almost same as that of the original model.

We will claim that all OPT conditions hold after procedure {\em shrink\_model} occurs.
Let us denote by $\mathcal{M} = (S,I,J)$ an original model with $|S| = q, |I| = n$ and $|J| = m$.
After we shrink the original model, we will consider a shrinking model, termed as $\mathcal{M}' = (S \setminus S' , I , J \setminus J')$.

\begin{claim}
After procedure {\em shrink\_model}, all OPT conditions still hold.
\end{claim}

\begin{proof}
Definitely, Sold-out condition holds since all consumers participate in model $\mathcal{M}'$.
Bpb condition holds since the demand of consumers are the same as before and items with extra demand will not be shrunken.
As will be shown in Lemma \ref{rplemma} and \ref{dplemma}, procedure {\em raise\_price} or {\em decrease\_price} might occur at most once.

Opt-prod condition may be potentially violated in that producers will not change their production schedule for items $J'$.
Let $p_j$ and $z_{sj}$ be price of item $j$ and production of item $j$ by producer $s$, respectively, at the beginning of a schedule.
Also, we let $p'_j$ and $z'_{sj}$ be price of item $j$ and production of item $j$ by producer $s$, respectively, at the end of the schedule.
We will show that
\[\forall s \in S : \sum_jp'_jz'_{sj} \geq \sum_jp_jz_{sj}/(1+\epsilon)\]
Let us consider in the case when $s \in S'$.
There are two types of producers in $S'$ as follows : 1) producers to consume an item in $J'$. 2) producers to produce an item in $J'$.
In other words, producer $s'$ has demand on items (resources) $J'$; producer $s''$ can increase the production on items (resources) $J'$.
In the first case, producer $s'$ may not have enough resource to increase his production for some items. Then,
\begin{itemize}
\item $j' \in J' : p'_{j'}z'_{s'j'} \geq p_{j'}z'_{s'j'}/(1+\epsilon) = p_{j'}z_{s'j'}/(1+\epsilon)$\\
\item $j' \in J \setminus J' : p'_{j'}z'_{s'j'} \geq p_{j'}z'_{s'j'}/(1+\epsilon) \geq p_{j'}z_{s'j'}/(1+\epsilon)$
\end{itemize}
Therefore, $\forall s \in S' : \sum_jp'_jz'_{sj} \geq \sum_jp_jz_{sj}/(1+\epsilon)$.

In the second case, producer $s''$ may not increase the production of items in $J'$.
\begin{itemize}
\item $j' \in J' : p'_{j'}z'_{s''j'} \geq p_{j'}z'_{s''j'}/(1+\epsilon) = p_{j'}z_{s''j'}/(1+\epsilon)$\\
\item $j' \in J \setminus J' : p'_{j'}z'_{s''j'} \geq p_{j'}z'_{s''j'}/(1+\epsilon) \geq p_{j'}z_{s''j'}/(1+\epsilon)$
\end{itemize}
Therefore, $\forall s \in S' : \sum_jp'_jz'_{sj} \geq \sum_jp_jz_{sj}/(1+\epsilon)$.

Now, let us consider in the case when producer $s'$ is in $S \setminus S'$.
Note that all items produced by producers in $S \setminus S'$ will be demanded by consumers.
\begin{itemize}
\item $j' \in J' : p'_{j'}z'_{s'j'} \geq p_{j'}z'_{s'j'}/(1+\epsilon) \geq p_{j'}z_{s'j'}/(1+\epsilon)$\\
\item $j' \in J \setminus J' : p'_{j'}z'_{s'j'} \geq p_{j'}z'_{s'j'}/(1+\epsilon) \geq p_{j'}z_{s'j'}/(1+\epsilon)$
\end{itemize}
Therefore, $\forall s \in S \setminus S' : \sum_jp'_jz'_{sj} \geq \sum_jp_jz_{sj}/(1+\epsilon)$.
\end{proof}

We have described an outline of the auction-based algorithm via considering the states of the market in procedures algorithm\_main, prod\_reschecdule, adjust\_bpb and satisfy\_demand.
To clarify the algorithm, we will explain each name of the procedures and functions of the procedures as follows.
}

\begin{enumerate}
\item outbid($i,j,k,\alpha$) :
consumer $i$ with surplus will outbid consumer $k$ to acquire item $j$.
The quantity that is outbid of item $j$ is determined by the utility function and the current allocation so as to maintain the invariance.
After we set the amount outbidden, we need to update the allocation and the surplus of both consumers $i$ and $k$.

\item purchase\_money($i,j,t_{oversupply}$) :
for item $j \in \mathcal{O}_s$, the procedure will check whether item $j$ is demanded by consumer $i$ with surplus.
If exists, consumer $i$ will purchase item $j$ as much as the minimum of the quantity oversupplied and extra demand.
Here we also need to update the allocation of consumer $i$ on item $j$ and the surplus.

\item transfer\_money($i,j,t_{oversupply}$):
in the case when there is no consumer with surplus for item $j \in \mathcal{O}_s$, the procedure transfer\_money will be called.
That is, consumer $i$ with demand on item $j$ will give up item $j'$ and spend the returning money to buy item $j$.
The allocation of consumer $i$ will be updated for item $j$ and $j'$.

\item sell\_lprice($i,j,t_{oversupply}$) :
if there is consumer neither with surplus nor with items transferable, producers will offer item $j$ at lower price.
Remember that the algorithm accepts two-level price: a higher price and a lower price by a factor of ($1+\epsilon$).
Producers are forced to supply item $j$ at a lower price, then either $j$ is fulfilled or no consumer has item $j$ at a higher price.

\item decrease\_price($j$) :
when procedure sell\_lprice fails to fulfill item $j$, we assure that the allocation of item $j$ is at a lower price to any consumer.
Then, procedure decrease\_price will be called, and the while statement of procedure bal\_os will check whether procedures purchase\_moeny, transfer\_money or sell\_lprice might be called.

\item raise\_price($j$) :
after the procedure prod\_reschedule, if there is a consumer with surplus then consumer $i$ outbids other consumers.
If consumer $i$ is not satisfied then she will increase the price of item $j$ by a factor of ($1+\epsilon$).
Procedure raise\_price itself does not change the allocation, but affects invariance that should be maintained in the algorithm.
As consumers acquire item $j$ at higher price than previous, the producers may
change their production plan. This will be checked in the program.

\end{enumerate}

\section{Invariances}\label{invariance}
In this section we will prove the invariances that hold during the course of
the algorithm.
We first discuss the properties of two procedures.
\subsection{Procedure raise\_price}  We  consider the number of occurrences of  procedure raise\_price.

\begin{lemma}\label{rplemma}
In procedure prod\_reschedule at most one occurrence of procedure raise\_price is required.
\end{lemma}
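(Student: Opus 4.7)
The plan is to exploit the small step size and the choice $\epsilon' \le \epsilon_1$ to bound how much the bang-per-buck condition can drift during one call to prod\_reschedule, and then argue that this drift is always absorbable by a single factor-$(1+\epsilon)$ price bump. Since invariant $I_2$ holds at entry, namely $v_{ij}(x_{ij}) \le \alpha_{ij} p_j \le (1+\epsilon) v_{ij}(x_{ij})$, the goal reduces to showing that one invocation of raise\_price is enough to restore $I_2$ after the single perturbation of the production plan.

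First I would quantify the allocation change. The production update is $z_{sj} \pm \Delta$ with $\Delta = \epsilon' \sum_s z_{sj}/q \le \epsilon_1 \sum_s z_{sj}/q$, so the total supply of each item changes multiplicatively by a factor at most $(1 \pm \epsilon_1)$. Once bal\_od redistributes the shortage and bal\_os the excess, each individual $x_{ij}$ is rescaled by at most the same factor. By the defining inequality of $\epsilon_1$, the marginal utility $v_{ij}(x_{ij})$ then changes by at most a factor $(1+\epsilon)$, so the worst-case perturbed bpb satisfies $v_{ij}(x'_{ij})/p_j \le (1+\epsilon) \alpha_{ij}$.

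Next I would show that one raise\_price already closes the gap. If adjust\_bpb detects $\alpha_{ij} p_j < v_{ij}(x_{ij})$ on some item $j$, multiplying $p_j$ by $(1+\epsilon)$ gives $\alpha_{ij} p_j' \ge v_{ij}(x_{ij})$, reinstating the lower bound of $I_2$; consumer $i$'s max-bpb value $\alpha_i$ is then pinned to the new lower ratio at $j$. The remaining step is to argue that after this single raise, no other item $j'$ in the same prod\_reschedule call still violates $I_2$: because the production step is one atomic LP move whose multiplicative effect on every $x_{ij}$ is already bounded by the same factor, the single $(1+\epsilon)$ price adjustment absorbs the full drift, and any residual imbalances can be cleared by outbid at unchanged prices.

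The hard part will be this uniqueness claim in the presence of many items whose production changed simultaneously. I would address it by noting that the step $\Delta$ moves $z_{sj}$ jointly along a single feasible direction returned by lp\_solver, so the ratios $x'_{ij}/x_{ij}$ are uniformly bounded inside $[1-\epsilon_1,\,1+\epsilon_1]$. Hence every bpb violation created by the step has the same worst-case magnitude $(1+\epsilon)$; once one of them triggers raise\_price, the corresponding consumer's $\alpha_i$ drops and, for the remaining violations, the freed allocation at the lower price $p_j/(1+\epsilon)$ is now enough to resolve each subsequent outbid within adjust\_bpb without another price increase, establishing the lemma.
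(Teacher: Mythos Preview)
Your first two paragraphs capture exactly the paper's argument: the choice $\epsilon'\le\epsilon_1$ guarantees that after the allocation on item $j$ shrinks, $v_{ij}(x'_{ij})\le(1+\epsilon)v_{ij}(x_{ij})=(1+\epsilon)\alpha_{ij}p_j$, so a single multiplication $p'_j=(1+\epsilon)p_j$ restores $\alpha_{ij}p'_j\ge v_{ij}(x'_{ij})$. That is precisely what the paper proves, and it is a \emph{per-item} statement: the paper's proof opens with ``procedure raise\_price occurs at most once per each item at each iteration in procedure prod\_reschedule.''

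The gap is in your third paragraph, where you reinterpret the lemma as a global bound (one call to raise\_price in total, across all items) and try to argue that a raise on item $j$ somehow resolves violations on other items $j'$. That step does not work. Calling raise\_price$(j)$ changes only $p_j$ and converts the $h_{\cdot j}$ holdings into $y_{\cdot j}$; it has no effect on $p_{j'}$, on $v_{ij'}(x_{ij'})$, or on the availability of low-price allocation of item $j'$. So a violation $\alpha_{ij'}p_{j'}<v_{ij'}(x_{ij'})$ persists, and if $\forall k:\,y_{kj'}=0$, adjust\_bpb will still need to call raise\_price$(j')$. Your claim that ``$\alpha_i$ drops'' is also backwards: $\alpha_i=\max_j\alpha_{ij}$, and lowering $\alpha_{ij}$ does not lower the maximum if the violating $\alpha_{ij'}$ is unaffected. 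The ``freed allocation at the lower price $p_j/(1+\epsilon)$'' is allocation of item $j$, not of $j'$, so it cannot feed an outbid on $j'$.

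You should simply drop the cross-item uniqueness attempt and state the conclusion as the paper does: for each item $j$, the drift in $v_{ij}$ induced by one production step is at most a factor $(1+\epsilon)$, hence at most one raise\_price$(j)$ suffices. A minor secondary point: your claim that ``each individual $x_{ij}$ is rescaled by at most the same factor'' after bal\_od is not literally what the procedure does (it removes the shortage from consumers one at a time rather than proportionally); the paper glosses over this too, but you should be aware that the bound $x'_{ij}\ge(1-\epsilon_1)x_{ij}$ for an individual consumer needs the aggregate shortage bound $\epsilon'\sum_s z_{sj}$ together with an argument about how bal\_od distributes it.
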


\begin{proof}
We claim that procedure raise\_price occurs at most once per each item at each iteration in procedure prod\_reschedule.

Definitely, procedure bal\_os will not call procedure raise\_price.
Procedure prod\_outbid allows producers to outbid consumers , but procedure raise\_price will not be called.

Procedure bal\_od will be called , and consumer $i$ may potentially violate condition (\ref{consumerFine2}) because $v_{ij}(x_{ij})$ increases when $x_{ij}$ decreases. To adjust bang-per-buck condition, procedure adjust\_bpb will be called.

If $\forall k: y_{kj} = 0$ and $\alpha_{ij}p_j < v_{ij}(x_{ij})$, then procedure raise\_price will be called once.
Note that $\alpha_{ij}p_j = v_{ij}(x_{ij}) \leq v_{ij}(x'_{ij}) \leq (1+\epsilon)\alpha_{ij}p_j$, where $x'_{ij}$ corresponds to the reduced amount of item $j$ of consumer $i$.
Since $\epsilon'$ is chosen to satisfy $v_{ij}(x_{ij}) \leq v_{ij}((1-\epsilon')x_{ij}) \leq (1+\epsilon)v_{ij}(x_{ij})$,
$\alpha_{ij}p_j \leq v_{ij}(x'_{ij}) \leq (1+\epsilon)\alpha_{ij}p_j$.
After raise\_price, i.e. $p'_j = (1+\epsilon)p_j, \alpha_{ij}p'_j \geq v_{ij}(x'_{ij})$.

Therefore, at most one occurrence of procedure raise\_price will be enough in procedure prod\_reschedule.
\end{proof}

\subsection{Procedure decrease\_price}
Similarly, we  consider the number of occurrences of  procedure decrease\_price.
\begin{lemma}\label{dplemma}
Procedure decrease\_price of each item occurs at most once per each iteration in procedure prod\_reschedule.
\end{lemma}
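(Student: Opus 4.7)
The plan is to mirror the structure of Lemma \ref{rplemma}. I will argue that after a single invocation of decrease\_price on item $j$ within one iteration of prod\_reschedule, the resulting bang-per-buck slack on $j$ is strictly larger than what is required to absorb the remaining over-supply, so subsequent iterations of the while loop in bal\_os fall into purchase\_money, transfer\_money, or sell\_lprice and never return to decrease\_price for the same item $j$.

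First, I would characterize the state of the market just before decrease\_price$(j)$ fires. Because bal\_os attempts purchase\_money, transfer\_money, and sell\_lprice in order, reaching decrease\_price$(j)$ forces three things: (i) sell\_lprice has failed, so every unit of $j$ currently held is already at the low price, i.e.\ $h_{ij}=0$ and $y_{ij}=x_{ij}$ for every consumer $i$; (ii) no consumer with residual money demands $j$ at the current high price, so purchase\_money does not apply; and (iii) no consumer can further reallocate money from another held item into $j$ at current prices, so transfer\_money does not apply either. The residual over-supply $\Delta$ is bounded by the step size from prod\_reschedule, $\Delta \le \epsilon'\sum_s z_{sj}/q$.

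Second, I would analyze how decrease\_price perturbs the bang-per-buck. After the call, the new high price is $p'_j = p_j/(1+\epsilon)$, so for any consumer $i$ still holding $j$ the bpb jumps to $\alpha'_{ij} = v_{ij}(x_{ij})/p'_j = (1+\epsilon)\alpha_{ij}$. Since $j$ was in the demand set $D_i$ beforehand, we had $\alpha_i \le \alpha_{ij} \le (1+\epsilon)\alpha_i$, and therefore $\alpha'_{ij} \ge (1+\epsilon)\alpha_i$ strictly exceeds $\alpha_i$. Thus $j$ becomes uniquely most attractive to $i$, and in particular transfer\_money becomes applicable: consumer $i$ can release units of another held item $j'$ (whose bpb equals $\alpha_i$) and spend the released money to buy additional $j$. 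A parallel argument shows that if any consumer acquires residual money in the process then purchase\_money also applies.

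Third, I would invoke the marginal-utility condition defining $\epsilon'$, namely $v_{ij}(x_{ij}) \le v_{ij}((1-\epsilon')x_{ij}) \le (1+\epsilon)v_{ij}(x_{ij})$ and its companion for increases, to show that the transfer needed to re-equalize bpb is large enough to absorb $\Delta$. Because $\epsilon' \le \epsilon^3/e$ is much smaller than $\epsilon$, the $(1+\epsilon)$-factor slack opened on $j$ is ample headroom: each holder can accept an increase in $x_{ij}$ by a factor of at most $(1+\epsilon')$ while $v_{ij}(x_{ij})/p'_j$ remains within the demand-set band $[\alpha_i,(1+\epsilon)\alpha_i]$. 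Aggregated over consumers holding $j$, this absorbs the entire $\Delta$ through transfer\_money (and, where applicable, purchase\_money or sell\_lprice) without requiring another price drop on $j$, so the while loop of bal\_os cannot re-enter decrease\_price$(j)$ in the same iteration.

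The main obstacle is the third step: rigorously confirming that absorbing $\Delta$ via transfer\_money does not create a secondary over-supply on another item $j'$ that later cascades back to $j$, and verifying that the simultaneous bpb constraints on all consumers holding $j$ admit a feasible absorption under the specific choice $\epsilon' = \min(\epsilon_1,\epsilon^3/e)$. This requires carefully bounding the interaction between the two sides of the marginal-utility condition when many consumers adjust in parallel.
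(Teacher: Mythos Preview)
Your approach is genuinely different from the paper's, and the gap you yourself flag in the last paragraph is real and is exactly what the paper's argument is designed to avoid.

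The paper does \emph{not} mirror Lemma~\ref{rplemma}. Instead it gives a short money-accounting contradiction: it bounds the total money needed to absorb all oversupply created in one step of \emph{prod\_reschedule} by $\epsilon'\sum_{s,j'} z_{sj'}p_{j'}$, specializes to $\epsilon'=\epsilon_2=\epsilon^3/e$, and compares this against the money that a single call to \emph{decrease\_price}$(j)$ effectively frees up at item $j$, namely a quantity of order $\epsilon\cdot\sum_i x_{ij}p_j$. Using $\sum_{s,j'} z_{sj'}p_{j'}\le e$ and the standing lower bound $\sum_i x_{ij}p_j\ge\epsilon^2$, one decrease already releases enough purchasing power to cover the entire oversupply, so a second call would force $\epsilon^2>\epsilon^2$. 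The argument is global in money and never touches the bpb band or the marginal-utility parameter $\epsilon_1$; in particular it makes no claim about \emph{which} procedure (purchase, transfer, sell at lower price) actually does the absorbing, only that the budget suffices.

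Your route via bpb slack and the $\epsilon_1$ condition is plausible for bounding how many extra units of $j$ a holder can accept while staying in her demand band, but it does not by itself certify that the \emph{money} to buy those units is available without perturbing other items. That is precisely the cascading problem you identify: \emph{transfer\_money} pulls money off some $j'$, possibly pushing $j'$ into oversupply, and you have no induction or potential to rule out a return to $j$. The paper's money-counting argument sidesteps this entirely because it compares a single global budget against a single global cost; no per-item or per-consumer feasibility analysis is needed. If you want to salvage your line, you would need to replace step three with exactly this kind of aggregate money bound, at which point the $\epsilon_1$-based bpb reasoning becomes superfluous and you converge to the paper's proof. As written, the proposal leaves the key step open.
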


\begin{proof} Let $p$ and $p'$ denote a previous price vector and a current price vector, respectively.
Let $z$ and $z'$ denote a previous production plan and a current production plan, respectively.

Suppose, for contradiction, that procedure decrease\_price on item $j$ occurs at least twice in an iteration.
Remember that we have two additional variables $\epsilon_1$ and $\epsilon_2$ and $\epsilon' = \min(\epsilon_1,\epsilon_2)$ in a paragraph in Section \ref{top_level}.
It is enough to show the case when $\epsilon' = \epsilon_2 = \epsilon^3/\sum_ie_i$.

The money required to consume the oversupplied items is
\[\epsilon'\sum_s\sum_{j'}z_{sj'}p_{j'} = \epsilon^3\sum_s\sum_{j'}z_{sj'}p_{j'}/\sum_ie_i\]

Calling decrease\_price on item $j$ returns the following amount of money to all consumers,
\[((\epsilon^2+2\epsilon)\sum_ih_{ij} + \epsilon\sum_iy_{ij})p_j\]

For decrease\_price to occur more than once, the required money must be greater than the money returned, i.e.
\[\frac{\epsilon^3\sum_s\sum_{j'}z_{sj'}p_{j'}}{\sum_ie_i} > ((\epsilon^2+2\epsilon)\sum_ih_{ij} + \epsilon\sum_iy_{ij})p_j\]
\noindent
Since $\sum_s\sum_{j'}z_{sj'}p_{j'} \leq \sum_ie_i$,
\begin{eqnarray*}
& &\epsilon^3 > ((\epsilon^2+2\epsilon)\sum_ih_{ij} + \epsilon\sum_iy_{ij})p_j\\
& &\Rightarrow \epsilon^2 > ((\epsilon+2)\sum_ih_{ij} + \sum_iy_{ij})p_j > \sum_ix_{ij}p_j\\
& &\Rightarrow\epsilon^2 > \sum_ix_{ij}p_j \geq \epsilon^2
\end{eqnarray*}
\noindent
Since every item is demanded by at least one consumer,
the money spent on item $j$ , $\sum_ix_{ij}p_j$ should be equal or at least $\epsilon^2$, i.e. $\sum_ix_{ij}p_j \geq \epsilon^2$.
This is a contradiction.
\end{proof}

\paragraph{Ensuring sold-out condition (\ref{soldout2}):}
Procedures adjust\_bpb and satisfy\_demand will not violate the invariance because there is no change in production plan and the amount of items sold does not decrease.

When producer $s$ changes a production plan from $Z_s(t)$ to $Z_s(t+1)$, the following cases arise.
\begin{itemize}
\item $\sum_sz_{sj}(t+1) - \sum_ix_{ij}(t+1) > 0$
\item $\sum_sz_{sj}(t+1) - \sum_ix_{ij}(t+1) < 0$
\end{itemize}
Procedure bal\_od resolves the first case, and the second case can be resolved in procedure bal\_os.
That is, procedure bal\_od balances all items in $\mathcal{O}_d$; procedure bal\_os balances oversupplied items.

\begin{enumerate}
\item For $j \in \mathcal{O}_d$, if $\exists i: x_{ij} > 0$ then $x_{ij}(t+1)= x_{ij}(t) - \min(x_{ij}(t) , \xi)$
shown in procedure bal\_od.

\item For $j \in \mathcal{O}_s$, if $\exists i: j \in D_i$ and $r_i > \epsilon e_i$, then
consumer $i$ will purchase item $j$ by using procedure purchase\_money (as shown in line 2 and 3 in procedure bal\_os).
It ensures that $p_j(t) = p_j(t+1)$ and while $\sum_sz_{sj}(t+1) = \sum_ix_{ij}(t+1)$, OPT(conditions sold-out(\ref{soldout2}) through bang-per-buck(\ref{consumerFine2})) conditions are met.

\item Otherwise, i.e. $\exists i$ s.t. $j \in D_i \bigcap \mathcal{O}_s$ and $r_i \leq \epsilon e_i$
\begin{enumerate}
\item To ensure that item $j$ is sold out, the item is sold to consumer $i$
s.t. $j \in D_i$.  This can only happen if consumer $i$ transfers money
from a item that provides lower bang-per-buck, i.e consumer $i$
spends money to purchase item $j$ instead of
item $k$ s.t. $x_{ik} > 0 \bigwedge \alpha_{ik} < \alpha_i$(in procedure transfer\_money).
The procedure transfer\_money ensures that sold-out(\ref{soldout2}) condition is satisfied.

\item If item $j$ is still over-supplied, then producers may offer item $j$ at a
lower-level price (remember there are 2 price levels).
Note that the current price $p_j$ does not change,
but consumers will acquire item $j$ more with the same amount of money.
When $\sum_sz_{sj}(t+1) - \sum_ix_{ij}(t+1) \leq \epsilon\sum_ih_{ij}(t+1)$, item $j$ will be sold out with the same money(in procedure sell\_lprice).
Then, $p_j(t) = p_j(t+1)$ and $\sum_ix_{ij}(t) < \sum_ix_{ij}(t+1)$.
Conditions OPT will be satisfied.

\item Item $j$ will not be sold out yet if $\sum_sz_{sj}(t+1) - \sum_ix_{ij}(t+1) > \epsilon\sum_ih_{ij}(t+1)$. Then, instead of providing item $j$ at lower-level price, producers will offer item $j$ at a lower price than before i.e. $p_j(t+1) = p_j(t)/(1+\epsilon)$(in procedure decrease\_price).
Then, $p_j(t) > p_j(t+1)$ and $\sum_ix_{ij}(t) < \sum_ix_{ij}(t+1)$.
If still $\sum_sz_{sj}(t+1) > \sum_ix_{ij}(t+1)$, then conditions OPT
may not be satisfied. However, then the iterations within the procedure
bal\_os will repeat until $\sum_sz_{sj}(t+1) = \sum_ix_{ij}(t+1)$.
\end{enumerate}
\end{enumerate}

\begin{lemma}\label{sold_out_prod}
Condition (\ref{soldout2}) is satisfied at the end of procedure prod\_reschedule.
\end{lemma}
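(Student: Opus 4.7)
The plan is to track exactly which operations inside prod\_reschedule can perturb the invariant $\sum_s z_{sj}/(1+\epsilon) \le \sum_i x_{ij} \le \sum_s z_{sj}$ and to verify that the constituent sub-procedures restore it before the procedure exits. Only the production update $Z(t)\to Z(t+1)$ alters $\sum_s z_{sj}$ while leaving $\sum_i x_{ij}$ fixed; the subsequent calls to adjust\_bpb, outbid, and raise\_price merely redistribute existing allocations among consumers or rescale prices, and therefore preserve $\sum_i x_{ij}$ for every item. Thus the argument reduces to showing that the two balancing stages, bal\_od and bal\_os, together close the gap opened by the production update.

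For items $j\in\mathcal{O}_d$ (where $\sum_s z_{sj}(t+1) < \sum_i x_{ij}(t+1)$), the plan is to invoke the update rule $x_{ij}(t+1)=x_{ij}(t)-\min(x_{ij}(t),\xi)$ used inside bal\_od, with $\xi$ chosen so that the aggregate excess demand is exactly absorbed. This yields $\sum_i x_{ij}(t+1)\le \sum_s z_{sj}(t+1)$. The subsequent adjust\_bpb call restores bang-per-buck using outbids and at most one raise\_price per item by Lemma \ref{rplemma}; both leave $\sum_i x_{ij}$ unchanged, so the upper inequality of (\ref{soldout2}) is preserved. For items $j\in\mathcal{O}_s$, the plan is a case analysis along the bal\_os cascade: purchase\_money and transfer\_money strictly increase $\sum_i x_{ij}$; if neither suffices, sell\_lprice converts $h_{ij}$-allocations into $y_{ij}$-allocations at the same nominal $p_j$, sold-out holding within the factor $(1+\epsilon)$ of (\ref{soldout2}) once the residual gap is at most $\epsilon\sum_i h_{ij}$; otherwise decrease\_price drops $p_j$ by $1+\epsilon$, returns money to the consumers holding $j$, and triggers another pass of the while-loop. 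Termination of the loop follows from Lemma \ref{dplemma}, which bounds decrease\_price to at most one occurrence per item.

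The main obstacle is non-interference between item-level adjustments: once a good $k$ has been balanced, a later operation on some $j\neq k$ must not push $k$ back out of sold-out. For transfer\_money this is immediate because the allocation change on $k$ only decreases $x_{ik}$, which cannot create over-demand on $k$; for sell\_lprice and decrease\_price the quantity change is localized to the item being processed; for purchase\_money the funding is drawn from the residual $r_i$ rather than from other items. Combining these observations with the two case analyses above, every item satisfies (\ref{soldout2}) within the stated $(1+\epsilon)$ tolerance at the moment prod\_reschedule exits, which completes the proof.
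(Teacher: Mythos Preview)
Your argument tracks the paper's own treatment: the lemma is stated there without a separate proof block, immediately after (and as a summary of) the informal case analysis on $\mathcal{O}_d$ versus $\mathcal{O}_s$ that explains how bal\_od, purchase\_money, transfer\_money, sell\_lprice, and decrease\_price successively restore the sold-out inequalities, with Lemmas~\ref{rplemma} and~\ref{dplemma} bounding the price moves. Your non-interference paragraph goes slightly beyond what the paper writes out; one small correction is that transfer\_money's decrease of $x_{ik}$ can push $k$ into $\mathcal{O}_s$ (not merely avoid over-demand), so the proper closure there is that $k$ is then handled by a subsequent pass of the bal\_os while-loop, which is exactly the iteration the paper invokes in its case~3(c).
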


\paragraph{Ensuring bpb condition (\ref{consumerFine2}):}

\begin{lemma}\label{main_conbpb}
Condition (\ref{consumerFine2}) is satisfied after procedures satisfy\_demand and adjust\_bpb.
\end{lemma}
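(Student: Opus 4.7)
The plan is to follow the two procedures in order and show that the loop guarding adjust\_bpb is precisely designed to restore condition (\ref{consumerFine2}). I first note that before satisfy\_demand the invariant holds (either because initialize establishes it or because the previous iteration of algorithm\_main terminated with it), so I only need to track the violations each procedure can introduce and how they are repaired.

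For satisfy\_demand, consumer $i$ with $r_i > \epsilon e_i$ acts on item $k = \arg\max_j \alpha_{ij}$. If outbid is invoked, the quantity transferred is chosen so that after updating the bang-per-buck, $\alpha_{ik}p_k = v_{ik}(x_{ik}(t+1))$ holds for $i$; however by concavity of $u_{kj}$, the displaced consumer's marginal utility $v_{kj}(x_{kj}(t+1))$ may strictly exceed $\alpha_{kj}p_j$. If raise\_price is invoked, $p_k$ scales by $1+\epsilon$, giving $\alpha_{ik}p_k(t+1) = (1+\epsilon)v_{ik}(x_{ik}(t))$, and a subsequent bang-per-buck update restores equality for $i$ while leaving a possibly violated inequality $\alpha_{kj}p_j < v_{kj}(x_{kj})$ at other consumers whose prices changed. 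In either case the only violations that remain are of the form $\alpha_{ij}p_j < v_{ij}(x_{ij})$, which is exactly the loop guard of adjust\_bpb.

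For adjust\_bpb, each iteration picks such a violating pair $(i,j)$ with $r_i > 0$. An outbid step increases $x_{ij}$ and hence weakly decreases $v_{ij}(x_{ij})$, while simultaneously increasing the product $\alpha_{ij}p_j$ via a fresh bang-per-buck update; the amount is chosen so that on termination of the step $v_{ij}(x_{ij}) \leq \alpha_{ij}p_j \leq (1+\epsilon)v_{ij}(x_{ij})$. If no outbid source exists, a raise\_price step scales $p_j$ by $1+\epsilon$, and the calibration of $\epsilon_1$ (so that $v_{ij}(x_{ij}) \leq v_{ij}((1-\epsilon_1)x_{ij}) \leq (1+\epsilon)v_{ij}(x_{ij})$) guarantees that the new $\alpha_{ij}p_j$ lies in the admissible window $[v_{ij}(x_{ij}),(1+\epsilon)v_{ij}(x_{ij})]$. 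Termination of the while loop follows because outbids are monetary transfers from a bounded residual pool and, by Lemma \ref{rplemma}, at most one raise\_price per item is needed per outer iteration, so the loop exits in finitely many steps.

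At exit, the negation of the guard gives $\alpha_{ij}p_j \geq v_{ij}(x_{ij})$ whenever $r_i > 0$. Combined with the per-step calibration above, which forces $\alpha_{ij}p_j \leq (1+\epsilon)v_{ij}(x_{ij})$ for every pair touched, and with the fact that untouched pairs retained the invariant from before the call (their $x_{ij}$, $p_j$, $\alpha_{ij}$ are all unchanged), condition (\ref{consumerFine2}) holds for every $(i,j)$ with $x_{ij} > 0$. The main obstacle I expect is the last point: showing that a consumer $k$ whose allocation was \emph{decreased} by an outbid (and who therefore may no longer have residual money, taking them out of the loop's attention) still satisfies the two-sided invariant. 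This is exactly what the choice $\epsilon' \leq \epsilon_1$ buys us---a single-step decrease of $x_{kj}$ by a factor $1-\epsilon'$ changes $v_{kj}$ by at most a factor $1+\epsilon$, so $\alpha_{kj}p_j$ is still within the allowed window even without a remedial outbid. Care is required to argue that the violations introduced by satisfy\_demand and by each outbid during adjust\_bpb are individually small enough for this single-factor slack to absorb them, which is why the step sizes throughout the algorithm are tied to $\epsilon'$.
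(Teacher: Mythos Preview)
Your overall plan---track violations introduced by satisfy\_demand and show that the adjust\_bpb loop clears them---matches the paper's short argument. But two of the mechanisms you invoke do not do what you claim.

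First, the appeal to $\epsilon'\le\epsilon_1$ for the displaced consumer is wrong in this context. The parameters $\epsilon_1,\epsilon'$ calibrate the size of \emph{production} steps in prod\_reschedule (so that $v_{ij}((1\pm\epsilon')x_{ij})$ stays within a $(1+\epsilon)$ factor of $v_{ij}(x_{ij})$); they do not bound how much of $y_{kj}$ an outbid removes. In procedure outbid the transferred amount is $t=\min(y_{kj},\,r_i/p_j,\,t_3)$, and $t$ can equal all of $y_{kj}$, so $x_{kj}$ may drop by far more than a factor $1-\epsilon'$. Your ``single-factor slack'' argument therefore does not establish the invariant for $k$. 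The actual reason $k$ is covered is the one you dismissed: the line $r_k:=r_k+tp_j/(1+\epsilon)$ in outbid \emph{gives} $k$ residual money, so $k$ is picked up by the adjust\_bpb loop (whose guard is $r_i>0$ together with a violation), not left outside it. Your premise ``who therefore may no longer have residual money'' is inverted.

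Second, Lemma~\ref{rplemma} is stated and proved only for prod\_reschedule; its proof relies precisely on the $\epsilon'$ bound on allocation changes coming from a production step. It does not bound raise\_price calls inside adjust\_bpb after satisfy\_demand, where allocation changes come from outbids and are not $\epsilon'$-small. Termination here is handled elsewhere (globally, via the price-range bound $N_1$), not by Lemma~\ref{rplemma}. The paper's own proof of the present lemma does not argue termination at all; it simply observes that in satisfy\_demand the update of $\alpha_{ij}$ yields $v_{ij}(x_{ij})=\alpha_{ij}p_j$, that the purchased amount on the argmax item is chosen so that $v_{ik}(x'_{ik})=\alpha_{ik}p_k/(1+\epsilon)$, and that adjust\_bpb iterates until equality is restored. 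Your write-up would be correct if you replace the $\epsilon'$ paragraph with the observation that displaced consumers re-enter the loop via their refunded money, and drop the appeal to Lemma~\ref{rplemma} for termination.
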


\begin{proof}
In initialization, $v_{ij}(x_{ij}) \leq \alpha_{ij}p_j$ is true. When procedure adjust\_bpb occurs in procedure algorithm\_main, $v_{ij}(x_{ij}) \leq \alpha_{ij}p_j$ is always true since adjust\_bpb iterates until $\forall j: v_{ij}(x_{ij}) = \alpha_{ij}p_j$.
Also, $\alpha_{ij}p_j \leq (1+\epsilon)v_{ij}(x_{ij})$ is true.

In procedure satisfy\_demand, note that $\alpha_{ij}$ is updated according to $v_{ij}(x_{ij})$ and $p_j$ which implies that $\forall j: v_{ij}(x_{ij}) = \alpha_{ij}p_j$. For $k = \arg\max_j\alpha_{ij}$, consumer $i$ purchases $x'_{ik}$ amount of item $k$ s.t. $v_{ik}(x'_{ik}) = \alpha_{ik}p_k/(1+\epsilon)$. Then, $v_{ij}(x_{ij}) \leq \alpha_{ij}p_j$ is still true.
\end{proof}

\begin{lemma}\label{bpb_od}
Condition (\ref{consumerFine2}) is satisfied at the end of procedure bal\_od.
\end{lemma}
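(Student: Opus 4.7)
The plan is to argue that procedure bal\_od makes only controlled changes to the state, and then track the effect of these changes on each side of the sandwich inequality $v_{ij}(x_{ij}) \leq \alpha_{ij}p_j \leq (1+\epsilon)v_{ij}(x_{ij})$.

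First, I would note that bal\_od changes neither prices nor production plans; by construction it only reduces $x_{ij}$ for items $j \in \mathcal{O}_d$ via $x_{ij}(t+1) = x_{ij}(t) - \min(x_{ij}(t),\xi)$. Consequently, for any pair $(i,j)$ where $x_{ij}(t+1) = 0$ the condition is vacuous, so I only need to handle pairs with $x_{ij}(t+1) > 0$. For those, since $u_{ij}$ is concave the marginal utility $v_{ij}$ is non-increasing, so the reduction can only increase $v_{ij}(x_{ij})$; the upper inequality $\alpha_{ij}p_j \leq (1+\epsilon)v_{ij}(x_{ij}(t+1))$ is therefore inherited immediately from the invariant that held at the end of adjust\_bpb (Lemma \ref{main_conbpb}), because the right-hand side has only grown while the left-hand side is unchanged.

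The substantive part is the lower inequality $v_{ij}(x_{ij}(t+1)) \leq \alpha_{ij}p_j$, which is the one at risk when $v_{ij}$ rises. Here I would invoke the choice of $\epsilon'$: the step parameter $\xi$ in bal\_od is small enough that $x_{ij}(t+1) \geq (1-\epsilon')x_{ij}(t)$ whenever the allocation remains positive, and since $\epsilon' \leq \epsilon_1$ the defining property of $\epsilon_1$ gives
\begin{equation*}
v_{ij}(x_{ij}(t)) \;\leq\; v_{ij}(x_{ij}(t+1)) \;\leq\; (1+\epsilon)\,v_{ij}(x_{ij}(t)).
\end{equation*}
Combining with the pre-bal\_od invariant $v_{ij}(x_{ij}(t)) \leq \alpha_{ij}p_j$ and with the value $\alpha_{ij}$ being recomputed from the (unchanged) price $p_j$ and the new allocation, I can conclude that $v_{ij}(x_{ij}(t+1)) \leq \alpha_{ij}p_j$ still holds after refreshing $\alpha_{ij}$, so condition (\ref{consumerFine2}) is restored.

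The main obstacle I expect is making the update of $\alpha_{ij}$ precise and showing that no other pair $(i,j')$ is broken as a side-effect, since a change in some $\alpha_{ij}$ could alter $\alpha_i = \max_j \alpha_{ij}$ and thereby the target bpb for the same consumer's other items. I would handle this by observing that bal\_od only touches items in $\mathcal{O}_d$ and that the relative change in every such $v_{ij}$ is bounded by a $(1+\epsilon)$ factor, so the new $\alpha_i$ is within the same tolerance; any residual slack between $\alpha_{ij}p_j$ and $v_{ij}(x_{ij}(t+1))$ falls within the $(1+\epsilon)$ slack allowed by condition (\ref{consumerFine2}), and the stricter equalization across items is deferred to the subsequent adjust\_bpb call.
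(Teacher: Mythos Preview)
Your quantitative bound via the choice of $\epsilon'$, namely $v_{ij}(x_{ij}(t+1)) \leq (1+\epsilon)\,v_{ij}(x_{ij}(t))$, is precisely the estimate the paper relies on (it is worked out inside the proof of Lemma~\ref{rplemma}). But your mechanism for closing the lower inequality is not the paper's, and as written it does not work. Procedure \emph{bal\_od} neither changes $p_j$ nor recomputes $\alpha_{ij}$: inspecting the code, it only decrements $x_{ij}$ and credits $r_i$. If one were to refresh $\alpha_{ij}:=v_{ij}(x_{ij}(t+1))/p_j$ by hand, the inequality $v_{ij}(x_{ij}(t+1))\le \alpha_{ij}p_j$ would become an identity and the argument would collapse to a tautology rather than establish the intended bang-per-buck relation across items.

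The paper argues instead through Lemma~\ref{rplemma}: since the reduction keeps $v_{ij}(x'_{ij})\le (1+\epsilon)\alpha_{ij}p_j$, the subsequent \emph{adjust\_bpb} needs at most one call to \emph{raise\_price} on item $j$ (setting $p'_j=(1+\epsilon)p_j$) to restore $v_{ij}(x'_{ij})\le \alpha_{ij}p'_j$. So the repair comes from a price increase in the following step, not from a bookkeeping update of $\alpha_{ij}$, and the lemma is really about \emph{bal\_od} together with that single price raise. Your closing remark about deferring to \emph{adjust\_bpb} is on the right track; that is where the argument must live. What you should prove is the bounded violation $v_{ij}(x'_{ij})\le (1+\epsilon)\alpha_{ij}p_j$ and then invoke Lemma~\ref{rplemma} for the single \emph{raise\_price} that repairs it.
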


\begin{proof}
We claim that condition (\ref{consumerFine2}) holds at the end of bal\_od based on Lemma \ref{rplemma}.

Let us consider procedure bal\_od where consumers balance their over-demand items.
Let consumer $i$ give up some amount on item $j$.
Consumer $i$ may potentially violate condition (\ref{consumerFine2}) because $v_{ij}(x_{ij})$ increases when $x_{ij}$ decreases.

However, as shown in Lemma \ref{rplemma}, raise\_price occurs at most once at each iteration.
Therefore, any price or allocation change does not violate condition (\ref{consumerFine2}).
\end{proof}

\begin{lemma}\label{bpb_os}
Condition bpb (\ref{consumerFine2}) is satisfied at the end of procedure bal\_os.
\end{lemma}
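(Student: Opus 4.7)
The plan is to verify the bpb condition for each of the four sub-procedures that bal\_os may invoke (purchase\_money, transfer\_money, sell\_lprice, decrease\_price), using the choice $\epsilon' = \min(\epsilon_1,\epsilon_2)$ from Section \ref{top_level}. The fundamental lever is that $\epsilon_1$ was specifically picked so that a multiplicative $(1\pm\epsilon')$ change in $x_{ij}$ produces at most a $(1+\epsilon)$ change in the marginal utility $v_{ij}$. Hence any bounded update to an allocation keeps the sandwich inequality $v_{ij}(x_{ij}) \leq \alpha_{ij}p_j \leq (1+\epsilon)v_{ij}(x_{ij})$ intact.

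First, I would handle purchase\_money. Here a consumer $i$ with $r_i > \epsilon e_i$ acquires more of an item $j \in D_i$. Before the update $j$ lies in her demand set, so $\alpha_{ij}p_j$ already sits in the bpb window. The increment in $x_{ij}$ is bounded by the available surplus and by the step-size constraint coming from $\epsilon'$, so $v_{ij}(x'_{ij}) \geq v_{ij}(x_{ij})/(1+\epsilon)$ and the sandwich inequality survives. The transfer\_money case is the same argument applied twice: for item $j$ where the allocation rises, the purchase\_money analysis applies; for item $j'$ whose allocation is reduced, the choice of $\epsilon_1$ guarantees $v_{ij'}(x_{ij'}) \leq v_{ij'}((1-\epsilon')x_{ij'}) \leq (1+\epsilon)v_{ij'}(x_{ij'})$, so $\alpha_{ij'}p_{j'}$ still lies in its window.

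Next, sell\_lprice leaves $p_j$ unchanged and only shifts supply from the upper price level to the lower one, raising $x_{ij}$ by a bounded amount; since prices and $\alpha_{ij}$ are unchanged while $v_{ij}(x_{ij})$ decreases by a factor at most $(1+\epsilon)$, the bpb window still contains $\alpha_{ij}p_j$. So the only genuinely tricky case is decrease\_price.

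The main obstacle is decrease\_price, where $p_j$ drops by a factor $(1+\epsilon)$. This price drop transiently shifts the bpb window downward and can make $\alpha_{ij}p_j < v_{ij}(x_{ij})$ for consumers holding $j$. I would argue as follows: by Lemma \ref{dplemma}, decrease\_price is invoked at most once per item in a single call to prod\_reschedule, so the cumulative price change on any item is a single $(1+\epsilon)$ factor. The loop inside bal\_os continues after decrease\_price; at the new, lower $p_j$ consumers can absorb the oversupply via subsequent rounds of purchase\_money, transfer\_money, or sell\_lprice, each of which (by the analysis above) re-establishes the bpb sandwich while moving toward $\sum_s z_{sj} = \sum_i x_{ij}$. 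Because $\alpha_{ij}$ itself is refreshed whenever an outbid or purchase occurs, once the oversupply on $j$ is eliminated we recover $v_{ij}(x_{ij}) \leq \alpha_{ij}p_j \leq (1+\epsilon)v_{ij}(x_{ij})$ for every $i,j$ with $x_{ij}>0$. Since bal\_os exits only when no item is oversupplied, condition (\ref{consumerFine2}) holds at its termination.
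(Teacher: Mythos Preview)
Your case-by-case decomposition (purchase\_money, transfer\_money, sell\_lprice, decrease\_price) and your appeal to Lemma~\ref{dplemma} match the paper's approach exactly, and the first three cases are argued in essentially the same way.

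The gap is in your treatment of decrease\_price. You write that after the price drop ``consumers can absorb the oversupply via subsequent rounds of purchase\_money, transfer\_money, or sell\_lprice, each of which (by the analysis above) re-establishes the bpb sandwich.'' But your analysis of those three procedures \emph{assumed} the bpb window held on entry; immediately after decrease\_price that premise is precisely what is in question, so invoking the earlier cases here is circular. Also, $\alpha_{ij}$ is not refreshed by purchase\_money, transfer\_money, or sell\_lprice, so your ``$\alpha_{ij}$ is refreshed whenever a purchase occurs'' claim does not match the algorithm.

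The paper closes this gap with a direct two-line computation rather than an appeal to subsequent iterations. Before decrease\_price one has $\alpha_{ij}p_j = v_{ij}(x_{ij})$; after the single price drop $p'_j = p_j/(1+\epsilon)$ (Lemma~\ref{dplemma} guarantees only one drop), this reads $v_{ij}(x_{ij})/p'_j = (1+\epsilon)\alpha_{ij}$. The consumer then acquires additional item~$j$, increasing $x_{ij}$ to $x'_{ij}$; the $\epsilon'$ step bound gives $v_{ij}(x_{ij})/(1+\epsilon) \le v_{ij}(x'_{ij}) \le v_{ij}(x_{ij})$, and dividing by $p'_j$ yields
\[
\alpha_{ij} \;\le\; v_{ij}(x'_{ij})/p'_j \;\le\; (1+\epsilon)\alpha_{ij},
\]
which is exactly condition~(\ref{consumerFine2}). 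Replacing your last paragraph with this computation fixes the argument.
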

\begin{proof}
Let producer $s$ change his production plan, and $j \in \mathcal{O}_s$.
There are two possible situations in procedure bal\_os: 1)procedure decrease\_price is not called. 2)procedure decrease\_price is executed.

Procedure decrease\_price is not called since consumers consume all amount of item $j$.
Let consumer $i$ have surplus, i.e. $r_i > \epsilon e_i$, and $j \in D_i$.
Since consumer $i$ has surplus, she will buy item $j$ as shown in procedure purchase\_money.
However, it will not violate condition (\ref{consumerFine2}) because consumers will only purchase item $j$ within a factor of ($1+\epsilon$) of bang-per-buck.
That is, previously $\alpha_{ij}p_j = v_{ij}(x_{ij})$, and $x'_{ij}$ increases such that $v_{ij}(x_{ij}) \leq v_{ij}(x'_{ij}) \leq (1+\epsilon)v_{ij}(x_{ij})$.

Although consumer $i$ has surplus, the algorithm is allowed for consumers to purchase items by calling procedures transfer\_money and sell\_lprice. Since the amount of production change is well defined, both procedures do not violate bpb condition.
Recall that procedure transfer\_money allows consumer $i$ to buy item $j$ to balance bang-per-buck.

When all other procedures do not resolve the over-demand of item $j$, procedure decrease\_price occurs.
The occurrence of procedure decrease\_price implies that consumers who want to buy item $j$ do not have enough money.
After procedure decrease\_price occurs, let consumer $i$ buy item $j$, and let $x'_{ij}$ correspond to the new amount of item $j$ of consumer $i$.

Previously $\alpha_{ij}p_j = v_{ij}(x_{ij})$ and after one occurrence of procedure decrease\_price as shown in Lemma \ref{dplemma},
\[(1+\epsilon)\alpha_{ij} = v_{ij}(x_{ij})/p'_j\]
Consumer $i$ now outbids to take item $j$. Then,
\begin{eqnarray*}
& &v_{ij}(x_{ij})/(1+\epsilon)p'_j \leq v_{ij}(x'_{ij})/p'_j \leq v_{ij}(x_{ij})/p'_j\\
& &(1+\epsilon)\alpha_{ij}/(1+\epsilon) \leq v_{ij}(x'_{ij})/p'_j \leq (1+\epsilon)\alpha_{ij}\\
& &\alpha_{ij} \leq v_{ij}(x'_{ij})/p'_j \leq (1+\epsilon)\alpha_{ij}
\end{eqnarray*}
\end{proof}

\paragraph{Ensuring opt-prod condition (\ref{producerFine2}):}
\ignore{
If procedure prod\_reschedule calls procedure bal\_od, then some producers may have profit reduced.
However, consumers will spend their money returned to purchase their demanded items.
Unless producers increase their profit, no production changes.
We will show that no production change still guarantees opt-prod condition.
}
\ignore{
Also, note that our production function satisfies {\em constant returns to scale}, i.e. if the quantity of production inputs is increase by a factor of $\delta$, the quantity of production outputs increases by a factor of more than $\delta$.
\begin{claim}\label{oneProducerProfit}
During procedure prod\_reschedule, each producer is greater than the previous profit by a factor of ($1-O(\epsilon)$) when the ratio of profit to cost is at least 2.
\end{claim}
\begin{proof}
Let denote by $T(t)$ a total profit at time $t$, and $T_s(t)$ a profit of producer $s$ at time $t$.
Let $J^+$ and $J^-$ be items whose production are positive and negative , respectively.
There are four cases as follows.
\begin{enumerate}
\item There are negative production items whose production decreases.
\item There are negative production items whose production increases.
\item There are positive production items whose production decreases.
\item There are positive production items whose production increases.
\end{enumerate}

We will claim that in the worst case producer $s$ has profit, $T_s(t+1)$, greater than equal to $T_s(t)/(1+\epsilon)$.
It may occur because producer $s$ has negative production items that other producers reduce the production, and price increase via procedure balancle\_overdemand. Positive production items whose production increases may decrease price resulting in the reduction of the profit.

As shown in Lemma \ref{rplemma} and Lemma \ref{dplemma}, at each iteration procedures raise\_price and decrease\_price occur at most once.

\begin{eqnarray}
T_s(t+1) &=& \sum_jp_j(t+1)z_{sj}(t+1)\\
&=& \sum_{j \in J^+}p_j(t+1)z_{sj}(t+1) + \sum_{j \in J^-}p_j(t+1)z_{sj}(t+1)\\
&\geq& \sum_{j \in J^+}p_j(t)z_{sj}(t+1)/(1+\epsilon) + (1+\epsilon)\sum_{j \in J^-}p_j(t)z_{sj}(t+1)\\
&\geq& (1+\delta)(A/(1+\epsilon) + (1+\epsilon)B) \label{AB}\\
&\geq& \frac{(1+\delta)}{(1+\epsilon)}(A + (1+\epsilon)^2B)\\
&\geq& \sum_{j \in J^+}p_j(t)z_{sj}(t)/(1+\epsilon) + (1+\epsilon)\sum_{j \in J^-}p_j(t)z_{sj}(t))\\
&\geq& \sum_jp_jz_{sj}/(1+\epsilon)^2\\
&\geq& T_s(t)/(1+2\epsilon)
\end{eqnarray}

In constraint \ref{AB}, set $A = \sum_{j \in J^+}p_j(t)z_{sj}(t)$ and $B = \sum_{j \in J^-}p_j(t)z_{sj}(t)$.
By our assumption of the ratio profit to cost, $\frac{A+B}{|B|} \geq 2$.
\begin{eqnarray}
2 &\leq& 2+O(\epsilon)\\
&\geq& \frac{O(\epsilon)}{\delta} + O(\epsilon)\\
\end{eqnarray}

Note that the algorithm allows procedure bal\_od may potentially violate the condition.
However, procedure consumer\_spend\_rmoney will have consumers spend their returning money.

Procedure producer\_spend\_rmoney is similar to procedure outbid, but it is different from outbid in that consumer $i$ can outbid with a budget equal to at most the money returned.
\end{proof}
}

\ignore{
\begin{claim}\label{tpclaim}
During procedure prod\_reschedule, total profit gained by producers increases by a factor of ($1+\epsilon'$) while profit of some producers may decrease by a factor of $(1+\epsilon)$.
\end{claim}
\begin{proof}
Let denote by $V_s(p,z)$ the current profit of producer $s$ at price $P$ and production $z$.
Let denote by $V'_s(p,z)$ the next profit of producer $s$ at price $P$ and production $z$.
There are four procedures that may potentially violate $V'_s(p,z) \geq V_s(p,z)/(1+\epsilon)$.
We claim that $V'_s(p,z) \geq V_s(p,z)/(1+\epsilon)$.

In bal\_od, consumers outbid other consumers to acquire items resulting in the potential violation of profit increase.
However, consumers will spend their returning money for their demanded items later.

Since each consumer will outbid with a budget equal to at most the money returned, it is guaranteed that raise\_price occurs at most once as shown in Lemma \ref{rplemma}.
Therefore, procedure producer\_outibd may reduce profit of producer as much as $\epsilon V_s(p,z)/(1+\epsilon)$ at most.

Now, let us consider total profit gained by producers at one iteration.
If total profit increases less than $(1+\epsilon')\min_sV_s(p,z)$, prod\_roll\_back is called.
Otherwise, the total profit of producers will increase at least as much as $(1+\epsilon')\min_sV_s(p,z)$.
Therefore, total profit will increase at least by a factor of ($1+\epsilon'$).
\end{proof}
}

It happens that no producer reschedule their production plan.
When there is no profit for producers, producers do not want to change their own production schedule.
Even in the case, we show that producers will satisfy the following inequality as follow:
\[\forall s,j : p_jz_{sj} \leq p_jz^*_{sj} \leq (1+\epsilon)^2p_jz_{sj}\label{producerFine3}\]

\begin{lemma}
\label{no-change-good}
When the procedure roll\_back occurs, producer $s$ still has $O(1+\epsilon)$-approximation optimal profit according to the current prices.
\end{lemma}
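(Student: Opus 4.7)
The plan is to translate the failure of check\_profit (which triggers roll\_back) into a direct bound on the LP optimality gap for producer $s$. When roll\_back fires, the incremental move from $Z_s(t)$ toward the LP-optimum $\hat Z_s$ that prod\_reschedule proposed failed to raise producer $s$'s profit by the threshold of roughly $\epsilon'\sum_j p_j z_{sj}(t)$; I want to deduce from this that $\sum_j p_j \hat z_{sj} \le (1+\epsilon)^2 \sum_j p_j z_{sj}(t)$, i.e.\ the unchanged (rolled-back) plan is already near-optimal at the current prices.

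The core step is to exploit affineness of $P^{\mathrm T}Z$ along the chord from $Z_s(t)$ to $\hat Z_s$. The proposed coordinate move is $\min(\Delta, |\hat z_{sj}-z_{sj}(t)|)$ in the direction of $\hat z_{sj}$ with $\Delta=\epsilon'\sum_s z_{sj}(t)/q$. I would re-parametrise this step as a single scalar $\lambda$ along the chord by choosing $\lambda=\min_j \Delta/|\hat z_{sj}-z_{sj}(t)|$ on unsaturated coordinates (saturated coordinates already realise the full LP gain on that item and can only help). Feasibility is automatic because $\mathcal Z_s$ is convex and $Z_s(t)+\lambda(\hat Z_s-Z_s(t))\in\mathcal Z_s$, and by linearity the profit of this point is $P^{\mathrm T}Z_s(t)+\lambda(P^{\mathrm T}\hat Z_s-P^{\mathrm T}Z_s(t))$. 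The roll\_back hypothesis then reads $\lambda(P^{\mathrm T}\hat Z_s-P^{\mathrm T}Z_s(t))<\epsilon' P^{\mathrm T}Z_s(t)$, and a concrete lower bound on $\lambda$ (coming from $\Delta$ being a $\Theta(\epsilon'/q)$ fraction of the total production) turns this into $P^{\mathrm T}\hat Z_s\le(1+\epsilon)P^{\mathrm T}Z_s(t)$ at the LP-query prices. The extra $(1+\epsilon)$ factor in the statement comes from the price drift during prod\_reschedule: by Lemma~\ref{rplemma} and Lemma~\ref{dplemma} each item undergoes at most one raise\_price and one decrease\_price per pass, so the prices handed to lp\_solver and the prices at the end of the pass differ coordinate-wise by at most $(1+\epsilon)$; composing the two $(1+\epsilon)$ factors yields the claimed $(1+\epsilon)^2$ approximation.

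The main obstacle I anticipate is handling unsaturated coordinates whose step sign is negative (i.e.\ $\hat z_{sj}<z_{sj}(t)$): locally each such coordinate contributes a profit \emph{loss} $p_j\Delta$, which is only balanced by feasibility-released gains on other coordinates. Step-by-step bookkeeping over individual items makes the inequalities messy. The chord-reparametrisation above is designed precisely to sidestep this: by packaging the whole step as a single feasible convex combination, both gains and losses collapse into the one scalar $\lambda(P^{\mathrm T}\hat Z_s-P^{\mathrm T}Z_s(t))$, so the roll\_back threshold directly controls the full LP gap. A secondary technicality is that the balancing routines bal\_od and bal\_os intervene between the LP step and the check\_profit test; I would evaluate profits at a fixed pre-balancing snapshot of prices and then push the resulting bound through the Lemma~\ref{rplemma}/\ref{dplemma} price-drift estimates rather than tracking profit through each balancing call.
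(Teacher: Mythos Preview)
Your approach is essentially the same as the paper's: both parametrise the proposed step as a point on the chord from the current plan to the LP optimum, use linearity of $P^{\mathrm T}Z$ to convert the roll\_back threshold into a bound on the full optimality gap (the paper writes this via $\sigma=|z'-z|/|\bar z-z|$ and the condition $\epsilon'/\sigma\le\epsilon$, which is your lower bound on $\lambda$), and then absorb the price drift using Lemma~\ref{dplemma}. If anything, your explicit reparametrisation is more careful than the paper's, which simply asserts that $z'$ lies on the segment $[z,\bar z]$ without addressing the coordinate-wise nature of the update.
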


\begin{proof}
Let $V(z,p)$ denote the current profit of producer $s$ on production plan $z$ according to the price vector $p$. Similarly, let $V(z',p)$ be the next profit of producer $s$ on production plan $z'$ according to the price vector $p$. $V(\bar{z},p)$ denotes the profit of producer $s$ when he has the production plan $\bar{z}$, the optimal production plan according to the price vector $p$.

When the production plan shifts from $V(z,p)$ to $V(z',p)$, price may change to $p'$ due to procedure decrease\_price.
The occurrence of procedure decrease\_price may violate that producer $s$ increases his profit.
In procedure prod\_reschedule, if producer $s$ has no profit increase, then there is no production change.
We will show that when procedure roll\_back, producer $s$ still guarantees his production profit is well bounded compared with the optimal production profit.
We show this by proving that
\[V(\bar{z},p) \leq (1+2\epsilon)V(z,p)\]

Note that $z,z',\bar{z}$ are points on the poly-tope of multiple linear production constraints.
By the property,
\[\bar{z} = z + (z' - z)/\sigma \Rightarrow \bar{z}p' = zp' + p'(z' - z)/\sigma\]
where $\sigma = |z'-z|/|\bar{z}-z|$.
\[V(\bar{z},p') = V(z,p')+(V(z',p')-V(z,p'))/\sigma\]
Let $V(z',p') = (1+\epsilon')V(z,p')$,
\[V(\bar{z},p') = V(z,p')+\epsilon' V(z,p')/\sigma\]
Let $\epsilon'/\sigma \leq \epsilon$,
\[V(\bar{z},p') \leq (1+\epsilon)V(z,p)\]
Note that the procedure decrease\_price occurs at most once in procedure prod\_reschedule which implies that $\forall j: p_j \leq (1+\epsilon)p'_j$.
\[V(\bar{z},p) \leq (1+\epsilon)V(\bar{z},p') \leq (1+2\epsilon)V(z,p)\]
\end{proof}

\section{Analysis of the algorithm}\label{complexity}
\subsection{Invariances}
\begin{lemma}\label{all-opt}
When the algorithm terminates conditions OPT are satisfied and the algorithm returns a $O(1+\epsilon)$-approximate optimum.
\end{lemma}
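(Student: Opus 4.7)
The plan is to chain together the per-procedure invariant lemmas already established and then argue that the while-loop's exit condition plus those invariants directly instantiate the four $\epsilon$-equilibrium conditions of Section~\ref{math_model}, after which the $O(1+\epsilon)$-approximation claim follows from the definition of $\epsilon$-equilibrium.

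First I would verify the three OPT conditions hold at the top of every iteration of the main while-loop by induction on iteration count. For the base case, the comments on procedure initialize assert that bpb, opt-prod and sold-out all hold after initialization. For the inductive step, I would trace through one pass of algorithm\_main: after satisfy\_demand the sold-out condition is preserved because outbid only transfers demand between consumers and raise\_price is accompanied by bpb updates; after adjust\_bpb, Lemma~\ref{main_conbpb} gives bpb and sold-out is preserved for the same reason; after prod\_reschedule, Lemma~\ref{sold_out_prod} gives sold-out, Lemmas~\ref{bpb_od} and~\ref{bpb_os} together give bpb throughout the bal\_od/bal\_os phases, and Lemma~\ref{no-change-good} gives opt-prod (either because the lp\_solver step actually improved the profit and the new $Z_s$ is feasible and within a $(1+\epsilon)$ factor of the optimal schedule $\hat Z_s$, or because check\_profit triggered a roll\_back in which case Lemma~\ref{no-change-good} certifies that the unchanged plan is already $(1+2\epsilon)$-optimal relative to the current prices).

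Second, I would handle the termination condition $I_4$. The loop exits exactly when no consumer $i$ satisfies $r_i > \epsilon e_i$, so $r_i \le \epsilon e_i$ for every $i$ at exit, which is precisely~(\ref{spendmoney2}). Combining this with the invariants just established, all of $I_1$ through $I_4$ hold simultaneously at termination, which is the definition of an $\epsilon$-equilibrium in the production model given at the end of Section~\ref{math_model}.

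Finally, for the $O(1+\epsilon)$-approximation claim, I would observe that $I_2$ implies each consumer's realized utility is within $(1+\epsilon)$ of her optimal bundle at prices $P$ (since any deviating purchase would be dominated in bang-per-buck by at most the same factor), $I_3$ implies each producer's profit is within $(1+\epsilon)$ of $\max_{Z\in\mathcal Z_s} P^T Z$, and $I_1$ together with $I_4$ implies that markets clear up to a $(1+\epsilon)$ slack and endowments are essentially exhausted. Standard arguments then give that the returned allocation and price vector form an $O(1+\epsilon)$-approximate market equilibrium. The main obstacle I anticipate is the opt-prod step: one must be careful that the choice of $\epsilon' = \min(\epsilon_1,\epsilon_2)$ and the step size $\Delta = \epsilon'\sum_s z_{sj}/q$ in prod\_reschedule together guarantee that the gap between the current plan and $\hat Z_s$ is small enough for Lemma~\ref{no-change-good} to apply at the moment the profit test fails; this is where the interaction between the discretization of the price space and the LP-driven direction of improvement has to be made tight.
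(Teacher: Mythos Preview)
Your proposal is correct and follows essentially the same route as the paper: the paper's proof simply cites Lemma~\ref{sold_out_prod} for $I_1$, Lemmas~\ref{main_conbpb}, \ref{bpb_od}, \ref{bpb_os} for $I_2$, and Lemma~\ref{no-change-good} for $I_3$, without spelling out the induction or the $O(1+\epsilon)$-approximation argument. Your version is more explicit (and your handling of $I_4$ and the approximation claim actually anticipates what the paper defers to the final Theorem), but the underlying decomposition and the lemmas invoked are identical.
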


\begin{proof}
Condition sold-out (\ref{soldout2}) is true at the end of each iteration of the algorithm as shown in Lemma \ref{sold_out_prod}, condition (\ref{soldout2}) holds at the end of prod\_reschedule.

Condition bang-per-buck (\ref{consumerFine2}) is satisfied at the end of algorithm as shown in Lemma \ref{main_conbpb}, bpb condition holds after calling procedures adjust\_bpb and satisfy\_demand. In the case of procedure prod\_reschedule, as shown in Lemma \ref{main_conbpb}, Lemma \ref{bpb_od} and Lemma \ref{bpb_os}, condition (\ref{consumerFine2}) holds.

Condition (\ref{producerFine2}) is true at the end of prod\_reschedule as shown \ref{no-change-good}.
\end{proof}

\subsection{Convergence}
Now, let us consider time complexity which guarantees that our algorithm converges in
We show the convergence of each procedure before we provide time complexity.

Remember that bidding is organized in rounds.
In each round every consumer is picked once and reduces his surplus until $r_i = 0$.
If there is no outbid, then procedure raise\_price will occur followed by procedure prod\_reschedule.
In procedure prod\_reschedule, the algorithm will balance between the demand and the production.

Let $N_0 = \displaystyle\log_{1+\epsilon}\frac{e}{\epsilon e_{min}}$, $N_1 = \displaystyle\log_{1+\epsilon}\frac{p_{max}}{p_{min}} = \log_{1+\epsilon}\frac{e}{\epsilon}$ and $N_2 = \displaystyle\log_{1+\epsilon'}\frac{e}{\epsilon^3e_{min}}$, where $p_{max} = \max_jp_j$ and $p_{min} = \min_jp_j$.

\begin{claim}\label{claim2}
After $N_1*N_2$ rounds of bidding, either the algorithm terminates or a round robin completes.
\end{claim}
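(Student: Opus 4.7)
The plan is to split the $N_1 \cdot N_2$ budget into two factors: at most $N_2$ ``phases'' delimited by calls to prod\_reschedule that actually modify the production plan, with at most $N_1$ bidding rounds inside each such phase. Each ordinary bidding round ends either when some consumer empties her residual money (contributing toward completion of a round robin) or when no outbid is available, in which case raise\_price fires and prod\_reschedule is subsequently invoked. It therefore suffices to count (i) the raise\_price events that can occur between two consecutive productive prod\_reschedule calls and (ii) the total number of productive prod\_reschedule calls.

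For (i), I would observe that between two consecutive productive prod\_reschedule calls the price of any single item $j$ is monotonically non-decreasing, since decrease\_price is confined to the interior of prod\_reschedule. Each raise\_price on $j$ multiplies $p_j$ by $(1+\epsilon)$, and Lemma~\ref{rplemma} guarantees at most one such call per item per reschedule. Because $p_j$ stays in $[p_{min}, p_{max}]$, this accounts for at most $N_1 = \log_{1+\epsilon}(p_{max}/p_{min})$ rounds per phase. For (ii), procedure check\_profit retains a reschedule only when the aggregate producer profit grows by a factor of at least $(1+\epsilon')$; the profit is bounded above by $e = \sum_i e_i$ and from below by a quantity of order $\epsilon^3 e_{min}/e$ (using the choice $\epsilon' = \epsilon^3/e$ and the fact that every item is produced in quantity at least $\epsilon$ at positive price). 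Hence at most $N_2 = \log_{1+\epsilon'}(e/(\epsilon^3 e_{min}))$ productive reschedules occur.

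Multiplying the two bounds yields the claimed $N_1 \cdot N_2$ rounds before either the algorithm terminates or a full round robin completes. The main obstacle is justifying the monotonicity of prices within a phase: raise\_price is the only modifier of $p_j$ outside of prod\_reschedule, but the intermediate procedures bal\_od, bal\_os, and adjust\_bpb shuffle allocations and money around and could in principle re-open a window for an additional raise\_price beyond the one already counted by Lemma~\ref{rplemma}. I would discharge this by exhibiting, for each such procedure, that any price update it triggers is itself a raise\_price (or a decrease\_price confined to prod\_reschedule, governed by Lemma~\ref{dplemma}), so that the per-item trajectory across a phase decomposes into a bounded number of strictly monotone bursts whose length is controlled by $N_1$. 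With that monotonicity in hand the counting argument above closes the claim.
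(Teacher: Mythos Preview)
Your proposal is correct and follows essentially the same approach as the paper: the same factorization into at most $N_1$ raise\_price events per phase and at most $N_2$ productive reschedules (each forcing a multiplicative $(1+\epsilon')$ profit increase). In fact you are more careful than the paper, whose three-line proof never addresses the price-monotonicity issue you flag; the paper simply asserts the $N_1$ bound on raise\_price without discussing the interaction with decrease\_price.
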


\begin{proof}
If any producer does not reschedule, then price rises.
In the worst case, the maximum number of calls to raise\_price is bounded by $\displaystyle\log_{1+\epsilon}\frac{p_{max}}{p_{min}}$, termed $N_1$.

If producers gain profit, then at least one producer will increase her profit by a factor of ($1+\epsilon'$) of the previous profit.
We can bound the number of occurrence as $\displaystyle\log_{1+\epsilon'}\frac{e}{\epsilon^3e_{min}}$, termed $N_2$.

After $N_1*N_2$ rounds, procedure algorithm\_main will be executed. If there is no consumer with residual money, then the algorithm will exit.
Otherwise, the next round robin will occur.
\end{proof}

Let $T_{ob}, T_{ls}, T_{bd},$ and $T_{bs}$ denote the time taken for procedures outbid, lp\_solver, bal\_od and bal\_os, respectively. $T_{ls} = O(qm^2(m+l)L)$ and $T_{bd} + T_{bs} = mT_{ob} + nm$ where $T_{ob} = \displaystyle\log_{1+\epsilon}\left(\frac{e}{\epsilon}\right)^{|E|}\frac{ev_{max}}{\epsilon v_{min}}$ ($v_{max} = \max_{ij}x_{ij}(0)$ and $v_{min} = \min_{ij}x_{ij}(a_{max}))$. Let $|E|$ be the number of nonzero utilities.
Also, note that $\epsilon' = \displaystyle\min\{\epsilon^3/e,\max_{ij}v_{ij}(0)/(1+\epsilon) = v((1+\epsilon')\epsilon)$.

\begin{lemma}
\label{thm-terminate}
The time complexity of finding equilibrium in MEP is $O(N_0 * N_1 * N_2(T_{ls} + T_{bd} + T_{bs}))$.
\end{lemma}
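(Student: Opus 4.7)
The plan is to combine three nested counting arguments with a per-iteration cost bound, matching the three factors $N_0, N_1, N_2$ and the $(T_{ls}+T_{bd}+T_{bs})$ term in the claimed complexity.

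First, I would structure the algorithm as three nested loops: (i) the outer while loop in \emph{algorithm\_main} over round robins, indexed by how much residual money remains; (ii) inside each round robin, a sequence of ``rounds of bidding'' that are either terminated by a price rise (\emph{raise\_price}) or by a profitable production reschedule; (iii) inside each such bidding round, the actual calls to \emph{outbid}, \emph{lp\_solver}, \emph{bal\_od} and \emph{bal\_os}. Claim \ref{claim2} already bounds (ii) by $N_1 \cdot N_2$, so I would just cite it. What remains is to bound (i) and to account for the work per round.

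Next I would bound the number of round robins by $N_0$. The argument is that each round robin either terminates the algorithm (when no consumer has residual money, i.e.\ condition (\ref{spendmoney2}) is met) or else the total money spent by consumers strictly increases. Because bidding monotonically brings each $r_i$ down toward $\epsilon e_i$ in geometric steps of factor $1+\epsilon$ (each outbid/raise\_price step spends a $(1+\epsilon)$-factor more money on a good), the minimum residual money across non-terminating consumers shrinks by a $(1+\epsilon)$ factor per round robin, giving at most $\log_{1+\epsilon}(e/\epsilon e_{min}) = N_0$ round robins. I would state this as the key monotonicity driving the outer loop, using the invariances of Lemma \ref{all-opt} to ensure that each round robin leaves the OPT conditions intact so that progress is not undone.

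Then I would account for the work inside one bidding round. One execution of \emph{prod\_reschedule} invokes \emph{lp\_solver} at cost $T_{ls}$, followed by \emph{bal\_od} and \emph{bal\_os} at cost $T_{bd}+T_{bs}$; the outbid work triggered from \emph{satisfy\_demand} and \emph{adjust\_bpb} is subsumed in $T_{bd}$ (since $T_{bd}+T_{bs}$ is defined as $mT_{ob}+nm$). Multiplying the per-round cost $T_{ls}+T_{bd}+T_{bs}$ by the $N_1 N_2$ rounds per round robin (Claim \ref{claim2}) and the $N_0$ round robins gives the stated bound $O(N_0 \cdot N_1 \cdot N_2 (T_{ls}+T_{bd}+T_{bs}))$.

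The main obstacle I expect is cleanly justifying the $N_0$ bound on round robins: the price vector can both rise and fall, so one cannot simply track prices. The cleanest way is to argue on residual money using the fact (established in Lemmas \ref{rplemma} and \ref{dplemma}) that \emph{raise\_price} and \emph{decrease\_price} each fire at most once per item per iteration of \emph{prod\_reschedule}, so the ``leakage'' of spent money back to consumers via price decreases is controlled, and genuine progress in consumer spending accumulates across round robins at a geometric rate; this is exactly what yields the $\log_{1+\epsilon}$ factor.
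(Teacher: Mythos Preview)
Your proposal follows essentially the same three-level decomposition as the paper's proof: cite Claim~\ref{claim2} for the $N_1 N_2$ bound on bidding rounds within a round robin, bound the number of round robins by $N_0$ via a geometric decrease in residual money, and multiply by the per-round cost $T_{ls}+T_{bd}+T_{bs}$. The paper's argument for $N_0$ is phrased slightly differently (it asserts total residual money drops by a factor $(1+\epsilon')$ between round robins rather than tracking the minimum $r_i$ and a $(1+\epsilon)$ factor as you do), and the paper additionally re-derives $T_{ob}$ inside the proof via a four-event case analysis of \emph{outbid} (a $y_{kj}\to 0$ event, an $r_i\to 0$ event, an $\alpha_{ij}$-drop event, and a $v_{ij}$-meets-$\alpha_{ij}$ event), citing \cite{GKV04}; you skip this since $T_{ob}$ is already defined before the lemma, which is defensible. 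Structurally the two proofs coincide.
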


\begin{proof}
Let us consider the worst case. For one round-robin, either procedure outbid or raise\_price occurs.
The number of iterations before a round robin occurs is bounded by $N_1 * N_2$ as shown in Claim \ref{claim2}.
Note that between round-robins, residual money decreases by a factor of ($1+\epsilon'$) of total money of consumers.
It means that total occurrence of round-robins is bounded by $\displaystyle N_0=\log_{1+\epsilon'}\frac{e}{\epsilon e_{min}}$.
Therefore, the total step is bounded by $N_0*N_1*N_2$.

Inside procedure prod\_reschedule, four events can occur in a call of outbid.
\begin{enumerate}
\item $y_{kj}$ becomes zero for some $k$. \label{outbid1}
\item $r_i$ becomes zero.\label{outbid2}
\item $\alpha_{ij}$ reduces by a factor of $(1+\epsilon)$.\label{outbid3}
\item $v_{ij}$ reaches $\alpha_{ij}$ in the inner while loop of algorithm main.\label{outbid4}
\end{enumerate}

The number of event (\ref{outbid1}) is bounded by the number of buyers having non-zero utilities on item $j$.
The total number of (\ref{outbid1}) events is bounded by $|E| \times N_1$.
The number of type (\ref{outbid2}) events is exactly equal to $n$ in every round of bidding.
At every type (\ref{outbid3}) event, bang-per-buck is reduced by a factor of ($1+\epsilon$) which varies from $\displaystyle\frac{v_{min}}{p_{max}}$ to $\displaystyle\frac{v_{max}}{p_{min}}$.
In every round only one event type of (\ref{outbid4}) occurs for each buyer.

Thus, outbid takes $\displaystyle|E|\log_{1+\epsilon}\frac{p_{max}}{p_{min}} + \log_{1+\epsilon}\frac{p_{max}v_{max}}{p_{min}v_{min}} = \displaystyle\log_{1+\epsilon}\left(\frac{e}{\epsilon}\right)^{|E|}\frac{ev_{max}}{\epsilon v_{min}}$ as shown in \cite{GKV04}.

Note that $lp\_solver$ is called per each iterations.
The function, $lp\_solver$, takes $O(qm^2(m+l)L)$, where $m \times l$ represents a matrix of production constraints and $L$ represents an input size.
$T_{lp\_solver} = O(qm^2(m+l)L)$.
It takes $mT_{ob}$ times for consumers to get money back and for the procedure decrease\_procedure, and it takes $nm$ times for producers to sell their produced items. $T_{bd} + T_{bs} = mT_{ob} + nm$.

Time complexity is $O(N_0 * N_1 * N_2(T_{ls} + T_{bd} + T_{bs}))$.
\end{proof}

\begin{theorem}
Approximation equilibrium in the market equilibrium with production, MEP, can be determined by a PTAS.
\end{theorem}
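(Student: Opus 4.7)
The plan is to assemble the theorem directly from the preceding lemmas: correctness follows from Lemma \ref{all-opt}, while the polynomial running time follows from Lemma \ref{thm-terminate} once we verify that each of the logarithmic factors $N_0,N_1,N_2$ is polynomial in the input description and in $1/\epsilon$. So the proof is essentially a bookkeeping argument.

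First, I would invoke Lemma \ref{all-opt} to conclude that upon termination the market state $(P,Z,X)$ satisfies conditions $I_1$--$I_3$ within a factor of $(1+\epsilon)$ (or $(1+\epsilon)^2$, absorbed into $O(1+\epsilon)$). Next I would address the termination condition $I_4$: the outer \emph{while} in algorithm\_main only exits when no consumer has residual money exceeding $\epsilon e_i$, so $r_i \leq \epsilon e_i$ holds for all $i$ at termination, which is precisely (\ref{spendmoney2}). Together with $I_1$--$I_3$ this gives an $\epsilon$-equilibrium in the sense of the definition in Section \ref{math_model}, so the output is an $O(1+\epsilon)$-approximate market equilibrium.

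For the running time, I would apply Lemma \ref{thm-terminate} and then check polynomiality of each factor. Recall $\epsilon' = \min(\epsilon_1,\epsilon^3/e)$ with $e=\sum_i e_i$; we have
\[
N_0 = \log_{1+\epsilon}\!\tfrac{e}{\epsilon e_{\min}},\qquad
N_1 = \log_{1+\epsilon}\!\tfrac{e}{\epsilon},\qquad
N_2 = \log_{1+\epsilon'}\!\tfrac{e}{\epsilon^3 e_{\min}},
\]
each of which is $O\!\bigl(\tfrac{1}{\epsilon'}\log(e/(\epsilon e_{\min}))\bigr)$ using $\log(1+x)\geq x/2$ for small $x$. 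Since $1/\epsilon'$ is bounded by a polynomial in $1/\epsilon$ and $e$, and the logarithm is bounded by the bit-length $L$ of the input, each $N_k$ is polynomial in $L$ and $1/\epsilon$. The inner costs $T_{ls}=O(qm^2(m+l)L)$ and $T_{bd}+T_{bs}=mT_{ob}+nm$ with $T_{ob}=\log_{1+\epsilon}\bigl((e/\epsilon)^{|E|}\cdot ev_{\max}/(\epsilon v_{\min})\bigr)$ are also polynomial in $L$ and $1/\epsilon$. Multiplying gives total running time $O(N_0 N_1 N_2 (T_{ls}+T_{bd}+T_{bs}))=\mathrm{poly}(L,1/\epsilon)$, so for every fixed $\epsilon>0$ the algorithm runs in polynomial time; this is exactly the definition of a PTAS.

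The main obstacle, and the place where I expect the argument to be most delicate, is the dependence on $\epsilon'$: because $\epsilon'=\Theta(\epsilon^3/e)$, $N_2$ carries a factor of order $e/\epsilon^3$, and I would need to argue carefully that $e$ is polynomially bounded in the input size (it is a sum of endowments listed in the input) so that this blow-up remains polynomial. A secondary concern is confirming that composing the constant-factor losses of Lemmas \ref{sold_out_prod}, \ref{main_conbpb}, \ref{bpb_od}, \ref{bpb_os}, and \ref{no-change-good} only degrades the approximation from $(1+\epsilon)$ to $(1+c\epsilon)$ for a small constant $c$, which is harmless since we may re-scale $\epsilon$ at the start.
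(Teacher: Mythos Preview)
Your proposal is correct and follows essentially the same approach as the paper: correctness from Lemma \ref{all-opt} together with the loop-exit condition giving $I_4$, and the running-time bound from Lemma \ref{thm-terminate}. The paper's own proof is even terser and simply cites those two lemmas without the explicit polynomiality bookkeeping for $N_0,N_1,N_2$ that you add; your caveat about the $e/\epsilon^3$ factor in $1/\epsilon'$ is a genuine subtlety the paper does not address.
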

\begin{proof}
At termination, termination condition (\ref{spendmoney2}) is true because $r_i$ is low.
Other conditions are true by Lemma \ref{all-opt} which ensures the correctness.
The time complexity result follows from Lemma \ref{thm-terminate}.
\end{proof} 

\section{Conclusion}
\par In this paper, we show an auction-based algorithm for a production model where consumers have nonlinear utility functions and producers have a set of linear capacity constraints.
Our algorithm can also be extended to arbitrary convex production regions and the Arrow-Debreu model.

\appendix
\section{Appendix - Algorithm}
\begin{algorithm}[!h]
\caption{algorithm\_main}
\begin{algorithmic}[1]
\STATE initialize
\WHILE{there is extra demand, i.e. $\exists i: r_i > \epsilon e_i$}
\STATE satisfy\_demand($i$)
\STATE adjust\_bpb
\STATE prod\_reschedule
\ENDWHILE
\end{algorithmic}
\end{algorithm}

\begin{algorithm}[!h]
\caption{initialize}
\begin{algorithmic}[1]
\STATE $\forall j : p_j := \epsilon$
\STATE $\forall i : \alpha_i := \max_jv_{ij}(0)/p_j$
\STATE $\forall i : \mathcal{D}_i := \arg\max_jv_{ij}(0)/p_j$
\STATE $\forall s,j: z_{sj} := \epsilon/q$
\STATE $\forall i : x_{ij} := \epsilon$, where $j \in \mathcal{D}_i$
\STATE $\epsilon_1$ s.t. $\forall i,j: \min_{x_{ij}}v_{ij}(x_{ij})/(1+\epsilon) = v_{ij}((1+\epsilon_1)x_{ij})$
\STATE $\epsilon_2 := \epsilon^3/n, n = \sum_ie_i$
\STATE $\epsilon' := \min(\epsilon_1,\epsilon_2)$
\end{algorithmic}
\end{algorithm}

\begin{algorithm}[!h]
\caption{satisfy\_demand($i$)}
\begin{algorithmic}[1]
\STATE $\forall j: \alpha_{ij} := v_{ij}(x_{ij})/p_j$
\STATE $j:=\arg\max_l\alpha_{il}$
\IF{ $\exists k: y_{kj} > 0$ }
\STATE outbid($i, k, j,\alpha_{ij}/(1+\epsilon)$)
\STATE $\alpha_{ij} := v_{ij}(x_{ij})/p_j$
\ELSE
\STATE raise\_price($j$)
\ENDIF
\end{algorithmic}
\end{algorithm}

\begin{algorithm}[!h]
\caption{adjust\_bpb}
\begin{algorithmic}[1]
\WHILE{$\exists i: r_i > 0$ and $\exists j: \alpha_{ij}p_j < v_{ij}(x_{ij})$}
\IF{$\exists k: y_{kj} > 0$}
\STATE outbid($i,k,j,\alpha_{ij}$)
\ELSE
\STATE raise\_price($j$)
\ENDIF
\ENDWHILE
\end{algorithmic}
\end{algorithm}

\begin{algorithm}[!h]
\caption{outbid($i, j, k, \alpha$)}
\begin{algorithmic}[1]
\STATE $t_1 := y_{kj}$
\STATE $t_2 := r_i/p_j$
\IF {$(v_{ij}(a_j) \geq \alpha p_j)$}
\STATE $t_3 := a_j$
\ELSE
\STATE $t_3 := \min \delta: v_{ij}(x_{ij}+\delta) = \alpha p_j$
\ENDIF
\STATE $t := \min (t_1,t_2,t_3)$
\STATE $h_{ij} := h_{ij} + t$
\STATE $r_i := r_i - tp_j$
\STATE $y_{kj} := y_{kj} - t$
\STATE $r_k := r_k + tp_j/(1+\epsilon)$
\end{algorithmic}
\end{algorithm}

\begin{algorithm}[htbp]
\caption{raise\_price($j$)}
\begin{algorithmic}[1]
\STATE $p_j := p_j(1+\epsilon)$
\STATE $\forall i: y_{ij} := h_{ij}$
\STATE $\forall i: h_{ij} := 0$
\end{algorithmic}
\end{algorithm}

\begin{algorithm}[!h]
\caption{prod\_reschedule}
\begin{algorithmic}[1]
\WHILE {$true$}
\STATE get the optimal production plan, i.e. $\forall s : \hat{z}_{sj} := lp\_solver(p)$
\STATE $\forall s,j : z'_{sj} := z_{sj} \pm \epsilon'\sum_sz_{sj}/q$
\IF {$\sum_s\sum_jp_jz'_{sj} > \sum_s\sum_jp_jz_{sj}$}
\STATE bal\_od : $\forall j \in \mathcal{O}_d$
\STATE adjust\_bpb
\STATE bal\_os
\STATE check\_profit
\ELSE
\STATE break
\ENDIF
\ENDWHILE
\end{algorithmic}
\end{algorithm}

\ignore{
\begin{algorithm}[!h]
\caption{shrink\_model}
\begin{algorithmic}[1]
\STATE construct a bipartite graph $(U,V,E)$.
\STATE check whether a disjoint graph $(U',V',E')$ s.t. consumers have no demand on $V' \subseteq \mathcal{O}^-_d$
\STATE consider a new model on $(U \setminus U', V \setminus V', E \setminus E')$.
\end{algorithmic}
\end{algorithm}
}

\ignore{
\begin{algorithm}[!h]
\caption{prod\_outbid($j$)}
\begin{algorithmic}[1]
\WHILE {$\exists s : z_{sj} < 0 \bigwedge z'_{sj} < z_{sj}$}
\IF {$\exists k : y_{kj} > 0$}
\STATE $t := \min(y_{kj},z_{sj} - z'_{sj})$
\STATE $z_{sj} := z_{sj} - t$
\STATE $y_{kj} := y_{kj} - t$
\STATE $r_k := r_k + tp_j$
\ELSIF {$\exists s' : z_{s'j} < 0$}
\STATE $t := \min(-z_{s'j},z_{sj} - z'_{sj})$
\STATE $z_{s'j} := z_{s'j} + t$
\STATE $z_{sj} := z_{sj} - t$
\ELSE
\STATE raise\_price($j$)
\STATE continue
\ENDIF
\ENDWHILE
\end{algorithmic}
\end{algorithm}
}

\begin{algorithm}[!h]
\caption{bal\_od}
\begin{algorithmic}[1]
\WHILE {$\exists j \in \mathcal{O}^+_d$}
\IF {$\exists i : x_{ij} > 0$}
\STATE $t := \min(x_{ij} , \sum_ix_{ij}-\sum_sz_{sj})$
\STATE $x_{ij} := x_{ij} - t$
\STATE $r_i := r_i + tp_j$
\ENDIF
\ENDWHILE
\end{algorithmic}
\end{algorithm}

\ignore{
\begin{algorithm}[!h]
\caption{spend\_return\_money($i,j,r'_i$)}
let $x'$ be the amount of the previous allocation.\\
let $x$ be the amount of the current allocation.
\begin{algorithmic}[1]
\WHILE{ $r'_i > 0$ }
\IF{ $\exists k: y_{kj} > 0$ and $\alpha_{ij}p_j < v_{ij}(x_{ij})$}
\STATE $x'_{ij} := x_{ij}$
\STATE outbid($i, j, k, v_{ij}(x_{ij}+r'_i/p_j)$)
\STATE $r'_i := r'_i - p_j(x_{ij} - x'_{ij})$
\ELSE
\STATE raise\_price($j$)
\ENDIF
\ENDWHILE
\STATE $\alpha_{ij} := v_{ij}(x_{ij})/p_j$
\end{algorithmic}
\end{algorithm}
}

\begin{algorithm}[!h]
\caption{bal\_os}
\begin{algorithmic}[1]
\WHILE {$\forall j \in \mathcal{O}_s$}
\IF {$\exists i : j \in \mathcal{D}_i$ and $r_i > \epsilon e_i$}
\STATE purchase\_money($i,j,\sum_sz_{sj} - \sum_ix_{ij}$)
\ELSE
\IF {$\exists i : j \in \mathcal{D}_i$ and $r_i \leq \epsilon e_i$}
\STATE transfer\_money($i,j,\sum_sz_{sj} - \sum_ix_{ij}$)
\ENDIF
\IF {producers provide items at lower price, i.e. $\sum_sz_{sj} - \sum_ix_{ij} \leq \epsilon'\sum_ih_{ij}$}
\STATE sell\_lprice($i,j,\sum_sz_{sj} - \sum_ix_{ij}$)
\ELSIF {Not enough items at lower price, i.e. $\sum_sz_{sj} - \sum_ix_{ij} > \epsilon'\sum_ih_{ij}$}
\STATE decrease\_price($j$)
\ENDIF
\ENDIF
\ENDWHILE
\end{algorithmic}
\end{algorithm}

\begin{algorithm}[!h]
\caption{purchase\_money($i,j,t_o$)}
\begin{algorithmic}[1]
\STATE $t := \min(t_o,r_i/p_j)$
\STATE $t_o := t_o - t$
\STATE $h_{ij} := h_{ij}+t$
\STATE $r_i : = r_i - tp_j$
\end{algorithmic}
\end{algorithm}

\begin{algorithm}[!h]
\caption{transfer\_money($i,j,t_o$)}
\begin{algorithmic}[1]
\IF { $t_o > 0$ and $h_{ij'} > 0$ }
\STATE $t := \min(h_{ij'}p_j'/p_j,t_o)$
\STATE $h_{ij'} := h_{ij'} - tp_j/p_j'$
\STATE $h_{ij} := h_{ij} + t$
\STATE $t_o := t_o - t$
\ENDIF
\end{algorithmic}
\end{algorithm}

\begin{algorithm}[!h]
\caption{sell\_lprice($i,j,t_o$)}
\begin{algorithmic}[1]
\WHILE {$t_o > 0$}
\IF {$t_o \geq \epsilon' h_{ij}$}
\STATE $y_{ij} := y_{ij} + (1+\epsilon')h_{ij}$
\STATE $t_o := t_o-\epsilon' h_{ij}$
\STATE $h_{ij} := 0$
\ELSE
\STATE $y_{ij} := y_{ij} + t_o$
\STATE $h_{ij} := h_{ij} - t_o/(1+\epsilon')$
\STATE $t_o := 0$
\ENDIF
\ENDWHILE
\end{algorithmic}
\end{algorithm}

\begin{algorithm}[!th]
\caption{decrease\_price($j$)}
\begin{algorithmic}[1]
\ignore{
\STATE $t_1 := (1+\epsilon)^2\sum_ih_{ij}$
\STATE $t_2 := (1+\epsilon)\sum_iy_{ij}$
\STATE $\sum_ih_{ij} := \frac{t_1+t_2-\sum_sz_{sj}}{\epsilon}$
\STATE $\sum_iy_{ij} := \sum_sz_{sj} - \sum_ih_{ij}$
}
\STATE $\forall i: h_{ij} := (1+\epsilon)h_{ij} + y_{ij}$
\STATE $\forall i: y_{ij} := 0$
\STATE $p_j = p_j/(1+\epsilon)$
\STATE $\forall i : \alpha_i := \max_{j'} v_{ij'}/p_{j'}$
\STATE $\forall i : \mathcal{D}_i := \arg\max_{j'} v_{ij'}/p_{j'}$
\end{algorithmic}
\end{algorithm}

\begin{algorithm}[!th]
\caption{check\_profit}
let $z'$ and $p'$ be vectors of the previous iteration.
\begin{algorithmic}[1]
\IF {$\sum_s\sum_jz_{sj}p_j - \sum_jz'_{sj}p'_j < \gamma = \epsilon' \min_s\sum_jp_jz_{sj}$}
\STATE roll\_back
\STATE break
\ENDIF
\end{algorithmic}
\end{algorithm}

\clearpage


\begin{thebibliography}{50}
\bibitem{ABH59} K. J. Arrow, H. D. Block and L. Hurwicz. On the Stability of the Competitive Equilibrium, \emph{Econometrica}, 27:82-109, 1959.
\bibitem{AD54} K. J. Arrow and G. Debreu. Existence of an Equilibrium for a Competitive Economy. \emph{Econometrica}, 22:265-290, 1954.
\bibitem{BS91} W. C. Brainard and H. E. Scarf. How to Compute Equilibrium Prices in 1891. \emph{Cowles Foundation Discussion Paper}, 1272.
\bibitem{CV04} B. Codenotti and K. Vardarajan. Equilibrium for markets with Leontief Utilies. \emph{ICALP 2004}
\bibitem{CPV04} B. Codenotti, S. Pemmaraju and K. Varadarajan. A note on the Computation of Equilibria in Exchange Markets with Gross Substitutibility. \emph{Preprint}
\bibitem{CPV05} B. Codenotti, S. Pemmaraju and K. Varadarajan. On the Polynomial Time Computation of Equilibria for Certain Exchange Economics. \emph{SODA 2005}
\bibitem{CMPRV05} B. Codenotti, B. Mccune, S. Pemmaraju, R. Raman and K. Varadarajan. An Experimental Study of Different Approaches to Solve the Market Equilbrium Problem. \emph{ALENEX 2005}.
\bibitem{CMRV05} B. Codenotti, B. Mccune, R. Raman, and K. Varadarajan. Does Theory Meet Practice? \emph{ESA 2005}
\bibitem{CMV05} B. Codenotti, B. Mccune, and K. Varadarajan. Market Equilibrium via the Excess Demand Function. \emph{STOC 2005}, May 22¡©24, 2005, Baltimore, Maryland, USA.
\bibitem{DPSV02} N. R. Devanur, C. Papadimitriou, A. Saberi, and V. Vazirani. Market Equilibrium via a Primal-Dual-Type Algorithm. \emph{FOCS 2002}, 389-395.
\bibitem{DV03} N. R. Devanur, and V. Vazirani. An Improve Approximation Scheme for Computing the Arrow-Debreu Prices for the Linear Case. \emph{FSTTCS 2003}
\bibitem{DV04} N. R. Devanur, and V. Vazirani. The Spending Constraint Model for Market Equilibrium: Algorithmic, Existence and Uniqueness Results. In \emph{Proceedings of the 36th Annual ACM Symposium on the Theory of Computing}, 2004.
\bibitem{E76} B. Eaves. A Finite Algorithm for the Linear Exchange Model. \emph{Journal of Mathematical Economics}, 3:197-203, 1976.
\bibitem{EG59} E. Eisenberg and D. Gale. Consensus of Subjective Probabilities. \emph{The Pari-Mutuel Method, Annals of mathematical Statistics}, 30, 165-168, 1959.
\bibitem{E61} E. Eisenberg. Aggregation of utility functions. \emph{Management Sciences}, 7(4):337-350, 1961.
\bibitem{FGKKS08} L. Fleischer, R. Garg, S. Kapoor, R. Khandekar, and A. Saberi. A Fast and Simple Algorithm for Market Equilibria, \emph{WINE 2008}.
\bibitem{GK04} R. Garg, and S. Kapoor. Auctin Algorithms for Market Equilibrium. In \emph{Proceedings of the 36th Annual ACM Symposium on the Theory of Computing}, 2004.
\bibitem{GK07} R. Garg, and S. Kapoor. Auction Algorithms for Market Equilibrium: Price Roll-Backs and Path Auctions for the Fisher Model, \emph{WINE 2007}.
\bibitem{GKV04} R. Garg, S. Kapoor, and V. Vazirani. An Auction-Based Market Equlilibrium Algorithm for the Separable Gross Substitutability Case, \emph{APPROX 2004}.
\bibitem{GV10} G. Goel and V. V. Vazirani. A Perfect Price Discrimination Market Model with Production, and a (Rational) Convex Program for it, \emph{SAGT}, 2010.
\bibitem{J06} K. Jain. A Polynomial Time Algorithm for Computing the Arrow-Debreu Market Equilibrium for Linear Utilities. \emph{Proceedings of the 45th Annual IEEE Symposium on Foundations of Computer Science}, 286-294, 2006.
\bibitem{KK06} K. Jain and K. Varadarajan. "Equilibriua for Economies with Productions: Constant-Returns Technologies and Production Planning Constraints", {\em SODA 2006}.
\bibitem{JVE05} K. Jain, V. Vazirani, and Y. Ye. Market equilibria for homothetic, quasi-concave utilities and economies of scale in production. \emph{Proceedings of the sixteenth annual ACM-SIAM symposium on Discrete algorithms}, January 23-25, 2005.
\bibitem{JMS03} K. Jain, M. Mahdian, and A. Saberi. Approximating Market Equilibrium. In \emph{Workshop on Approximation Algorithms for Combinatorial Optimization APPROX 2003}.
\bibitem{KMV07} S. Kapoor, A. Mehta, and V. Vazirani. An Auction-Based Market Equilibrium Algorithm for a Production Model, \emph{Theoretical Computer Science 2007}.
\bibitem{NP83} E. I. Nenakov, and M. E. Primak. One algorithm for finding solutions of the Arrow-Debreu model. \emph{Kibernetica}, 3:127-128, 1983.
\bibitem{DM92} D. J. Newman and M. E. Primak. Complexity of Circumsribed and Insribed Ellipsoid Methods for Solving Equilibrium Economics Models, {\em Applied Mathematics and Computation}, 52:223-231, 1992.
\bibitem{P93} M. E. Primak. A converging algorithm for a Linear Exchange Model. \emph{Journal of mathematical Economics}, 22:181-187,1993.
\end{thebibliography}
\end{document}